\documentclass{article}

\usepackage{arxiv}

\usepackage[utf8]{inputenc} % allow utf-8 input
\usepackage[T1]{fontenc}    % use 8-bit T1 fonts
\usepackage{hyperref}       % hyperlinks
\usepackage{url}            % simple URL typesetting
\usepackage{booktabs}       % professional-quality tables
\usepackage{amsfonts}       % blackboard math symbols
\usepackage{nicefrac}       % compact symbols for 1/2, etc.
\usepackage{microtype}      % microtypography
\usepackage{lipsum}
\usepackage{graphicx}
\graphicspath{ {./images/} }

% Natbib setup for author-year style
% \usepackage{natbib}
% \bibliographystyle{abbrvnat}
% \setcitestyle{authoryear,open={(},close={)}}
%  \bibpunct[, ]{(}{)}{,}{a}{}{,}%
%  \def\bibfont{\small}%
%  \def\bibsep{\smallskipamount}%
%  \def\bibhang{24pt}%
%  \def\newblock{\ }%
%  \def\BIBand{and}%

% \input{macros.tex}

\usepackage{algorithm}
\usepackage{algorithmic}
\usepackage{amsmath}
\usepackage{amsthm}
\usepackage{amssymb}
\usepackage{indentfirst}
\usepackage[inline]{enumitem}
\usepackage{romannum}
\usepackage{authblk}
\usepackage[caption=false]{subfig}
% \usepackage{float}
%%%%%%%%%%%%%%%%%%%%%%%%%%%%%%%%%%%
\usepackage{ifthen}
\newboolean{showcomments}
\setboolean{showcomments}{true}
\newcommand{\yu}[1]{\ifthenelse{\boolean{showcomments}}
{ \textcolor{blue}{(Yu says:  #1)}}{}}
\newcommand{\xiaoqi}[1]{\ifthenelse{\boolean{showcomments}}
{ \textcolor{red}{(Xiaoqi says:  #1)}}{}}
\newcommand{\adam}[1]{\ifthenelse{\boolean{showcomments}}
{ \textcolor{red}{(Adam says:  #1)}}{}}
\newcommand{\shai}[1]{\ifthenelse{\boolean{showcomments}}
{ \textcolor{red}{(Shai says:  #1)}}{}}
\newcommand{\shaiimp}[1]{\ifthenelse{\boolean{showcomments}}
{ \textcolor{blue}{(Shai says:  #1)}}{}}
\newcommand{\addcites}[0]{\ifthenelse{\boolean{showcomments}}
{ \textcolor{green}{(add citation(s))}}{}}
\newcommand{\addref}[0]{\ifthenelse{\boolean{showcomments}}
{ \textcolor{green}{(add ref)}}{}}
\newcommand{\todo}[1]{\ifthenelse{\boolean{showcomments}}
{ \textcolor{red}{(To do:  #1)}}{}}
\newcommand{\note}[1]{\ifthenelse{\boolean{showcomments}}
{ \textcolor{red}{#1}}{}}

%%%%%%%%%%%%%%%%%%%%%%%%%%%%%%%%%%%

\DeclareMathOperator*{\argmax}{arg\,max}
\DeclareMathOperator*{\argmin}{arg\,min}
\newcommand{\bigO}{O}
\newcommand{\loglogm}{\log m / \log\log m}
\newcommand{\opti}{\text{OPT}^{(i)}}
\newcommand{\wopti}{\text{wOPT}^{(i)}}

% task
\newcommand{\task}[1]{#1}
% machine
\newcommand{\machine}[1]{#1}
% processing speed for machine 
\newcommand{\ps}[1]{s_{#1}}
% communication speed between machines
\newcommand{\cs}[2]{s_{#1, #2}}
% processing units of task
\newcommand{\nodeW}[1]{w_{#1}}
% weights between tasks
\newcommand{\edgeW}[2]{w_{#1, #2}}
% group assignment function for makespan
\newcommand{\ga}[1]{f(#1)}
% group assignment function for makespan
\newcommand{\gam}[1]{f_{\mathsf{mksp}}(#1)}
% group assignment function for total weighted completion time
\newcommand{\gaw}[1]{f_{\mathsf{twct}}(#1)}
% group division function
% \newcommand{\gd}[1]{g(#1)}
% machine assignment function
\newcommand{\ma}[1]{h(#1)}
% starting time for task
\newcommand{\st}[1]{t(#1)}
% weight for the weighted completion time
\newcommand{\weight}[1]{\omega_#1}
% time index
\newcommand{\tl}{q}
% upper bound for time index
\newcommand{\tlu}{Q}
% a set of jobs based on interval
\newcommand{\js}{\mathcal{J}}

% terminal chain
\newcommand{\tc}{\mathbb{C}}

% communication speed between assigned machine of task c_{i+1} and any machine in the same group as assigned machine for task c_i
% \newcommand{\sbar}[2]{\bar{s}(\ma{c_{#1}}, M_{\ga{c_{#2}}})}
\newcommand{\sbar}[2]{\bar{s}(c_{#1}, c_{#2})}
% maximum idle time on any machine in group M_k
\newcommand{\ebar}[1]{\bar{e}_{#1}}
% idle time on machine i in group M_k
\newcommand{\e}[1]{{e}_{#1}}

\newtheorem{theorem}{Theorem}[section]
\newtheorem{proposition}[theorem]{Proposition}

\newtheorem{lemma}[theorem]{Lemma}

\newtheorem{example}{Example}

% \DeclareTextFontCommand{\emph}{\bfseries\em}

\title{Communication-Aware Scheduling of Precedence-Constrained Tasks on Related Machines}

% \author{
% Yu Su \\
% %   Department of Computing and Mathematical Sciences\\
% California Institute of Technology\\
% %   Pasadena, CA 91125 \\
% \texttt{suyu@caltech.edu} \\
%   %% examples of more authors
% \And
% Xiaoqi Ren \\
% Google\\
% %   Kirkland, WA 98033 \\
% \texttt{xiaoqiren@google.com} \\
% \AND
% Shai Vardi \\
% %   Krannert School of Management\\
% Purdue University\\
% %   West Lafayette, IN 47907 \\
% \texttt{svardi@purdue.edu} \\
% \And
% Adam Wierman \\
% %  Department of Computing and Mathematical Sciences \\
% California Institute of Technology\\
% %  Address \\
% \texttt{adamw@caltech.edu} \\
%   %% \And
%   %% Coauthor \\
%   %% Affiliation \\
%   %% Address \\
%   %% \texttt{email} \\
%   %% \And
%   %% Coauthor \\
%   %% Affiliation \\
%   %% Address \\
%   %% \texttt{email} \\
% }

\author[\space\space 1]{Yu Su\thanks{suyu@caltech.edu}}
\author[\space\space 2]{Xiaoqi Ren\thanks{xiaoqiren@google.com}}
\author[\space\space 3]{Shai Vardi\thanks{svardi@purdue.edu}}
\author[\space\space 1]{Adam Wierman\thanks{adamw@caltech.edu}}
\affil[1]{Department of Computing and Mathematical Sciences, California Institute of Technology, Pasadena, CA}
% \affil[ ]{\textit {\{suyu, adamw\}@caltech.edu}}
\affil[2]{Google, Kirkland, WA}
% \affil[ ]{\textit {xiaoqiren@google.com}}
\affil[3]{Krannert School of Management, Purdue University, West Lafayette, IN}
% \affil[ ]{\textit {svardi@purdue.edu}}

\date{}
\begin{document}

\pagenumbering{arabic}
\maketitle

\begin{abstract}
Scheduling precedence-constrained tasks is a classical problem that has been studied for more than fifty years. However, little progress has been made in the setting where there are communication delays between tasks.  Results for the case of identical machines were derived nearly thirty years ago, and yet no results for related machines have followed. In this work, we propose a new scheduler, Generalized Earliest Time First (GETF), and provide the first provable, worst-case approximation guarantees for the goals of minimizing both the makespan and total weighted completion time of tasks with precedence constraints on related machines with machine-dependent communication times.  %This scheduling problem has been originally considered in the domain of parallel processing, but recently has attracted lots of attention due to its applications in large-scale machine learning platforms. 
\end{abstract}

% keywords can be removed
%\keywords{First keyword \and Second keyword \and More}

\section{Introduction}
In this paper we study scheduling precedence-constrained tasks onto a set of heterogeneous machines with communication delays between the machines in order to minimize the makespan or the total weighted completion time.
% \shai{isn't this two problems? Also, I am not sure how classical this very specific problem is and unless Coffman and Bruno studied this exact problem (or more precisely two problems, this might put the reviewer off before they get past the first sentence. I suggest: In this paper we study scheduling precedence-constrained tasks onto a set of heterogeneous machines with communication delays between the machines in order to minimize the makespan or the total weighted completion time.}
Initially, work on this topic was motivated by the goal of scheduling jobs on multi-processor systems, e.g., \cite{coffman1976computer}. Today this problem is timely due to the prominence of large-scale, general-purpose machine learning platforms.  For example, in systems such as Google's TensorFlow \cite{abadi2016tensorflow}, Facebook's PyTorch \cite{paszke2017automatic} and Microsoft's Azure Machine Learning (AzureML) \cite{chappell2015azure}, machine learning workflows are expressed via a computational graph, where jobs are made up of tasks, represented as vertices, and precedence relationships between the tasks, represented as edges.  This ``precedence graph'' abstraction allows data scientists to quickly develop and incorporate modular components into their machine learning pipeline (e.g., data preprocessing, model training, and model evaluation) and then easily specify a workflow. %  Sometimes this design phase is done using a graphical environment, e.g., in AzureML.    
The graphs that specify the workflows in platforms such as TensorFlow, PyTorch and AzureML can be made up of hundreds or even thousands of tasks, and the jobs may be run on systems with thousands of machines.  As a result, the performance of the platforms depends on how these precedence-constrained tasks are scheduled across machines. 
% \adam{Maybe add a note that about the batch case being relevante (as opposed on online) here?}%  This scheduling problem is known to be very challenging:  both the goal of  partitioning the jobs across machines and of scheduling the jobs among a group of machines are NP-hard \cite{mayer2017tensorflow}. Hence,  one can only hope to develop approximation algorithms and/or heuristics given the scale of the problems faced by production systems.

The goal of scheduling jobs composed of  precedence-constrained tasks has been studied for more than fifty years, starting with the work of \cite{graham1969bounds}.  The simplest version of this scheduling problem focuses on scheduling a single job with $n$ precedence-constrained tasks on $m$ identical parallel machines with the goal of minimizing the \textit{makespan}: the time until the last task completes. More generally, the goal of minimizing the \textit{total weighted completion time} is considered, where the total weighted completion time is a weighted average of the completion time of each task in the job\footnote{Makespan is a special case of total weighted completion time as a dummy task with weight one can be added as the final task of the job, with all other tasks given weight zero.}. For the goal of minimizing the makespan, Graham showed that a simple list scheduling algorithm can find a schedule of length within a multiplicative factor of $(2-1/m)$ of the optimal. %, i.e., that list scheduling is a $(2-1/m)$-approximation.
This result is still the best guarantee known for this simple setting. %\shai{the reader will probably wonder whether it is known to be optimal, and if no, what lower bound is known} \yu{we covered these later in the literature review, but we can briefly mention it here if necessary.} 
Since then, research has sought to generalize the setting considered in two important ways: (i) to non-identical machines and (ii) to the case where communication is needed between tasks. 

% \shai{use "cite" if you say "this result was extended by" and citep if you say  "this result was extended in"}

Addressing these two issues has been one of the major goals of the field since Graham's initial result fifty years ago. Since that time, considerable progress has mostly been made on generalizations to heterogeneous machines.  The focus has been on \textit{(uniformly) related machines}, a model where each machine $i$ has a speed $s_i$, each task $j$ has a size $\nodeW{j}$, and the time to run task $j$ on machine $i$ is $\nodeW{j}/s_i$.  Under the related machine model, a sequence of results in the 1980s and 1990s culminated in a result that showed how to use list scheduling algorithms in combination with a partitioning of machines into groups with ``similar'' speeds in order to achieve an $O(\log m)$-approximation algorithm for makespan \cite{chudak1999approximation}. This result was also extended in the same work to total weighted completion time by proposing a time-indexed linear programming technique.  The extension yields an $O(\log m)$-approximation for total weighted completion time. The idea of using a \textit{group assignment} rule to partition machines into groups of machines with similar speeds and then to assign tasks to a group is a powerful one and has shown up frequently in the years since; it recently led to a breakthrough when the idea of partitioning machines was adapted further and combined with a variation of list scheduling to obtain a $\bigO(\loglogm)$-approximation algorithm for both makespan and total weighted completion time \cite{li2017scheduling}.

Despite the progress made in generalizing from identical machines to heterogeneous machines, there has been little progress toward the goal of incorporating communication delays. Machine-dependent communication delays are crucial for capturing issues such as data locality and the difference between intra-rack and inter-rack communication.  We note that if communication  delays are machine independent, they can simply be viewed as part of the processing time, making the problem much easier.  The state-of-the-art result in the case of communication delays is~\cite{hwang1989scheduling}, which studies machine-dependent communication costs in the setting of \textit{identical machines}.  In this context, a greedy algorithm called Earliest Time First (ETF) has been shown to produce schedules with a makespan bounded by $(2-1/m)\opti + C$, where $\opti$ is the optimal schedule length when ignoring communication time and $C$ is the maximum amount of communication of a chain (path) in the precedence graph.  However, the analysis for the case of identical machines in \cite{hwang1989scheduling} is quite complex and it has proven difficult to generalize to the related machines setting.  As a result, there has been no progress outside the context of identical machines in the thirty years since \cite{hwang1989scheduling}. %\shai{but has there been progress on identical machines? This sentence seems to imply there has been} \yu{yes, there has been progress in generalizing the result from homogeneous settings to heterogeneous settings in the case of identical machines, as mentioned in the previous paragraph.}

Given the challenge of designing schedulers that are approximately optimal for related machines with machine-dependent communication time, most work studying the design of scheduling policies in this context has relied on developing scheduling heuristics and evaluating these heuristics numerically, e.g., \cite{ananthanarayanan2014grass, ren2015hopper, wu1990hypertool, xu2015hybrid, yang1994dsc, topcuoglu2002performance}. For a recent survey see \cite{wu2015workflow, mayer2017tensorflow} and the references therein.  

\textbf{Contributions.} In this paper we propose a new scheduler, Generalized Earliest Time First (GETF), and prove that it computes a makespan that is at most of length $\bigO(\loglogm) \opti + C$ in the case of related machines and machine-dependent communication times, where $C$ is the amount of communication time in a chain (path) in the precedence graph. Additionally, we generalize our result to the objective of total weighted completion time and show that GETF produces a schedule $\mathcal{S}$ whose total weighted completion time is at most $\bigO(\loglogm)$ $\wopti + \sum_j \weight{j}C(\mathcal{S},j)$, where $\wopti$ is the optimal total weighted completion time, $\weight{j}$ is the weight in the objective, and $C(\mathcal{S},j)$ is the communication requirement in a chain in the precedence graph. These two results address long-standing open problems. Note that the makespan result matches state-of-the-art bounds for the special cases (i) when there is zero communication time and (ii) when the machines are identical. In the case of total weighted completion time, no previous result exists for the case of identical machines with communication time, but the result matches the best known bound for the case with related machines and zero communication time.%\shai{next sentence repeats almost verbatim what has already been said. }  Specifically, the general problem of how to schedule precedence-constrained tasks on related machines with machine-dependent communication times has seemed unapproachable, with minimal progress since \cite{hwang1989scheduling}.

The key technical advance that enables our new result is a dramatically simplified analysis of ETF in the setting of identical machines. The state-of-the-art result in this setting is \cite{hwang1989scheduling}, which is established using a long, complex argument. In contrast, the core idea in our proof of Theorem \ref{theorem:main} is a short, simple proof of a \textit{Separation Principle} which can be used to provide a novel proof of the approximation ratio for ETF in the case of identical machines.  The proof is simple and general enough that it can be extended from identical machines to related machines by adapting recent advances from \cite{li2017scheduling}.  

\textbf{Related literature.}
%The work in this paper connects to two sets of literature: (i) research focused on improving the scheduling of tasks within machine learning platforms and (ii) research focused on the design and analysis of scheduling precedence-constrained tasks across parallel machines.  The former, includes a variety of system design and scheduling heuristics, while the later focuses on developing algorithms with provable approximation ratios. \shai{from everything we've said so far, this seems like an attempt to sell the paper, but I definitely only buy ii) as truly related despite the fact that we can use it for i)... maybe just remove this paragraph?}
In recent years, the design and optimization of large-scale general-purpose machine learning platforms has been an overarching goal, bridging many communities in both industry and academia.  The emergence of platforms such as TensorFlow, PyTorch and AzureML illustrate the power of such systems to democratize tools from machine learning, making them accessible and scalable for anyone.

Since the emergence of such systems, there has been a torrent of work that seeks to optimize the scheduling and assignment of the precedence-constrained graphs in such systems.  Heuristics have emerged for managing straggler tasks, e.g., \cite{ren2015hopper, ananthanarayanan2013effective, ananthanarayanan2014grass, ananthanarayanan2010reining}; scheduling tasks with different computational properties, e.g., jobs with MapReduce-type structures \cite{vavilapalli2013apache, lin2013joint, palanisamy2011purlieus, tan2012delay, verma2012two, wang2016maptask}, scheduling approximation jobs \cite{ananthanarayanan2014grass, zaharia2008improving, ananthanarayanan2010reining}, and managing communication times \cite{mayer2017tensorflow, hashemi2018tictac}.  Many of these heuristics have led to system designs that have had a significant industrial impact.  

Such designs typically address the challenges associated with precedence constraints in ad hoc ways based on simplifying assumptions about the structures of the graphs.
In contrast, there is a long history of analytic work seeking to design schedulers for precedence-constrained tasks with provable worst-case guarantees.  As we have already mentioned, the initial results on this topic for makespan were provided by Graham, who gave a $(2-1/m)$-approximation algorithm based on list scheduling for $P |prec|C_{max}$ \cite{graham1969bounds}. A decade later, it was shown by \cite{lenstra1978complexity} that it is NP-hard to approximate $P |prec|C_{max}$ within a factor of $4/3$. This left a gap which has been essentially closed recently, when \cite{svensson2010conditional} proved that it is NP-hard to achieve an approximation factor less than $2$, given the assumption of a new variant of the Unique Game Conjecture introduced by \cite{bansal2009optimal}. In the case of total weighted completion time objective $P | prec | \sum_j \weight{j} C_j$, the negative results carry over from the makespan objective since makespan objective can be viewed as a special case of total weighted completion time objective. Moreover, under the assumption of the stronger version of the Unique Game Conjecture, it is shown in \cite{bansal2009optimal} that it is even hard to approximate within a factor of $2-\epsilon$ for the problem with one machine. On the positive side, a $7$-approximation was given in \cite{hall1996scheduling}, and
\cite{munier1998approximation,queyranne2006approximation} later improved it to a $4$-approximation. The current best known result is a $(2+2\ln{2}+\epsilon)$-approximation by \cite{li2017scheduling} via a time-indexed linear programming relaxation technique.

The results mentioned above all focus on identical machines with zero communication delays.  When related machines are considered, the problem becomes more challenging.  An early result on this topic is \cite{chudak1999approximation}, which proposed a Speed-based List Scheduling (SLS) algorithm that obtains an approximation of $\bigO(\log{m})$ for $Q|prec|C_{max}$. A time-indexed linear programming technique has been proposed in the same work that gives a $\bigO(\log{m})$ bound for $Q|prec|\sum_j \weight{j}C_j$. Recently, an improvement to $\bigO(\loglogm)$ for both objectives was proven in \cite{li2017scheduling}. The best known lower bound for the problem of related machines is from \cite{bazzi2015towards}, which shows that it is impossible for a polynomial time algorithm to approximate the minimal makespan to any constant factor assuming the hardness of an optimization problem on $k$-partite graphs.

In contrast, when communication delay is considered, much less is known. To our knowledge, no approximation ratio is known for $P | prec, c_{i,j} |C_{max}$, and this open problem was noted by \cite{drozdowski2009scheduling}. The only algorithm with a guaranteed worst-case performance bound in this setting is ETF \cite{hwang1989scheduling}, which provides a bound of $(2-1/m)\opti + C$ on the makespan in the case of identical machines. Prior to our paper, no algorithm with a worst-case approximation guarantee for either makespan or total weighted completion time is known for the case of related machines with communication delays, i.e., $Q | prec, c_{i,j} |C_{max}$ and $Q | prec, c_{i,j} |\sum_j \weight{j}C_j$.

\section{Problem formulation} \label{section:model}
We study a model that generalizes $Q|prec,c_{i,j}|\sum_j \weight{j} C_j$ by  including machine-dependent communication times. Our goal is to derive bounds on the total weighted completion time and the  makespan, which is an important special case of the total weighted completion time that uses a particular choice of $\weight{j}$.%\shai{a special case thereof.} \yu{I prefer the longer expression as I find it more clear though lengthy?} 

Specifically, we consider the task of scheduling a job made up of a set $V$ of $n$ tasks on a heterogeneous system composed of a set $M$ of $m$ machines with potentially different processing speeds and communication speeds.  The tasks form a directed acyclic graph (DAG) $G = (V, E)$, in which each node $\task{j}$ represents a task and an edge $(\task{j'}, \task{j})$ between task $\task{j}$ and task $\task{j'}$ represents a precedence constraint. We interchangeably use node or task, as convenient. Precedence constraints are denoted by a partial order $\prec$ between two nodes of any edge, where $\task{j'} \prec \task{j}$ means that task $\task{j}$ can only be scheduled after task $\task{j'}$ completes. Let $\nodeW{j}$ represent the processing demand of task $\task{j}$. The amount of data to be transmitted between task $\task{j'}$ and task $\task{j}$ is represented by the edge weight $\edgeW{j'}{j}$ of $(\task{j'}, \task{j})$.

The system is heterogeneous in two aspects: processing speed and communication speed. For processing speed, we consider the classical \textit{related machines} model: a machine $\machine{i}$ has speed $\ps{i}$, and it takes $\nodeW{j}/\ps{i}$ uninterrupted time units for task $\task{j}$ to complete on machine $\machine{i}$. Specifically, computer resources such as CPUs and GPUs have varying speeds; hence schedulers must be able to handle heterogeneous servers. The communication speed $\cs{i'}{i}$ between any two machines $\machine{i'}, \machine{i}$ is heterogeneous across different machine pairs.  We index the machine to which task $\task{j}$ is assigned by $\ma{j}$. If $i = \ma{j}$ and $i' = \ma{j'}$, then communication time between task $\task{j'}$ and $\task{j}$ in the DAG is $\edgeW{j'}{j} / \cs{i'}{i}$. 
% Machine dependent communication times are essential to the formulation. If communication times were machine independent, they could have been viewed as part of the processing time, making the problem much easier. However, machine dependent communication times have not been previously considered in the literature on scheduling precedence-constrained tasks, but they are crucial for capturing issues such as data locality and the difference between intra-rack and inter-rack communication. \shai{I agree with Adam's addition, but the paragraph needs to be reworked as it reads a bit odd now... MOVED TO INTRO - REMOVE FROM HERE}

For simplicity, we consider a setting where the machines are fully connected to each other, so any machine can communicate with any other machine. This is without loss of generality as one can simply set the communication speed between any two disconnected machines to 0. We also assume that the DAG is connected. Again, this is without loss of generality because, otherwise, the DAG can be viewed as multiple DAGs and the same results can be applied to each. As a result, our results trivially apply to the case of  multiple jobs.
Additionally, our model assumes that each machine (processing unit) can process at most one task at a time, i.e., there is no \textit{time-sharing}, and the machines are assumed to be \textit{non-preemptive}, i.e., once a task starts on a machine, the scheduler must wait for the task to complete before assigning any new task to this machine.  This is a natural assumption in many settings, as interrupting a task and transferring it to another machine can cause significant processing overhead and communication delays due to data locality, e.g., \cite{kwok1999static}.
% For example, in machine learning platforms, interrupting a task and transferring it to another machine can cause significant processing overhead and communication delays due to data locality \citep{kwok1999static}. \shai{maybe just : This is a natural assumption in many settings, as interrupting a task and transferring it to another machine can cause significant processing overhead and communication delays due to data locality, e.g., \citep{kwok1999static}. }

The goal of the scheduler in our model is to minimize the \textit{total weighted completion time} of the job, denoted by $\sum_j \weight{j} C_j$, where $C_j$ is the completion time of task $j$ and $\weight{j}$ is the weight associated with task $j$. We also consider the \textit{makespan}, denoted by $C_{max}$, which is the time when the the final task in the DAG completes.
Note that the problem we consider is an offline scheduling problem.  This is a classical problem with relevance to modern ML platforms, which use batch scheduling of precedence constrained tasks in their pipelines, e.g. \cite{abadi2016tensorflow}.  It is also known to be challenging. Specifically, minimizing the makespan (and hence also minimizing the total weighted completion time) of jobs with precedence constraints is known to be NP-complete \cite{garey1979computers}. Thus, we aim to design a polynomial-time algorithm that computes an \textit{approximately} optimal schedule. We say that an algorithm is a $\rho$-approximation algorithm if it always produces a solution with an objective value within a factor of $\rho$ of optimal in polynomial time.

Our main results use three important concepts. First, our results provide bounds in terms of $\opti$ and $\wopti$, which are  the optimal makespan and the optimal total weighted completion time if the communication delays were zero, respectively. Note that $\opti$ and $\wopti$ are a lower bound of the corresponding objectives of the problem when communication delays are not included. Second, we provide bounds in terms of the communication time of a \textit{terminal chain} of the schedule. A \textit{chain} in the DAG is a sequence of immediate predecessor-successor pairs, whose first node is a node with no predecessor and last node is a leaf node with no successors. Third, we provide bounds in terms of the communication time of a terminal chain of a subset of the DAG that is naturally formed in the scheduling process. %\shai{I am left unimpressed by this paragraph, and don't understand the point of it} 
Formally, for any given schedule, a terminal chain $\tc$ of length $N$ can be constructed in the following fashion. We start with one of the tasks that ends last in the given schedule, denoted as $\task{c_N}$. Among all the immediate predecessors of node $\task{c_N}$, we pick one of the tasks that finishes last and define it as $\task{c_{N-1}}$. In such a way, we can construct a chain of tasks $\task{c_1} \prec \task{c_2} \prec \ldots \prec \task{c_N}$ until the first node $\task{c_1}$ in the chain does not have a predecessor. There may be many such terminal chains, and our results apply to any arbitrary terminal chain for the given schedule.
%The model we consider in this paper was initially used in the context of scheduling multiprocessor systems \cite{coffman1976computer}; however, it has recently attracted attention because of the emergence of large machine learning platforms where workflows are specified via precedence graphs. In both contexts, computing resources such as CPUs and GPUs have varying speeds; hence schedulers must be able to handle heterogeneous servers.  Similarly, in both contexts, data placement makes the communication overhead in between tasks an important consideration for the scheduler.  Thus, it is crucial for a scheduler to be able to apply in the context of related machines with machine-dependent communication, as we consider here. \shai{this paragraph is completely out of place and belongs in the intro... As it repeats things that were in the intro, I suggest deleting it and adding anything that is here but not in the intro to the intro.}

\section{Generalized Earliest Time First (GETF) Scheduling} \label{section:GETF}
In this section, we introduce a new algorithm -- Generalized Earliest Time First (GETF) -- for scheduling tasks with precedence constraints in settings where servers have heterogeneous service rates and communication times. For GETF, we provide provable worst-case approximation guarantees for both the goal of minimizing the makespan and minimizing the total weighted completion time. 

At its core, GETF is a greedy algorithm.  Like ETF, it seeks to run tasks that can be started earliest, thus minimizing the idle time created by the precedence constraints in a greedy way.  However, this simple heuristic does not take into account the potential difference between the service rates of different machines.  For this, GETF is similar to SLS. It uses a group assignment function $\ga{\cdot}$ to determine sets of ``similar'' machines and then assigns tasks to different groups of machines. Within the groups of similar machines, GETF uses the ETF greedy allocation rule. 

%\shai{weird paragraph structure, jumping between ideas}
GETF is parameterized by a group assignment function $\ga{\cdot}$ and a tie-breaking rule, and proceeds in two stages. At every iteration, GETF finds a set $A$ of all the tasks that are ready to process and are not yet scheduled. For every task in $A$, GETF calculates the earliest starting time if it was only allowed to schedule on machines in the assigned group. Then, GETF computes $B$, the set of tasks in $A$ with the earliest starting times, and chooses one of the tasks to process on a machine based on the tie-breaking rule. The pseudocode for GETF is presented in Algorithm \ref{algo:generalized_ETF} and Figure \ref{diagram:etf} in section \ref{section:comparison} illustrates the operation of GETF on a simple example (Example \ref{example:etf}). 

\begin{algorithm}[t]
\caption{ Generalized Earliest Time First (GETF) }\label{algo:generalized_ETF}
\begin{flushleft}
        \textbf{INPUT:} group assignment rule $\ga{\cdot}$, tie-breaking rule\\
        \textbf{OUTPUT:} schedule $\mathcal{S}$ with machine assignment mapping $\ma{\cdot}$ and starting time mapping $\st{\cdot}$
\end{flushleft}
\begin{algorithmic}[1]
\STATE $R \leftarrow \{\task{1}, \task{2}, \ldots, \task{n}\}$
\WHILE{$R \neq \emptyset$ }
    \STATE $A = \{ \task{j}: \task{j} \in R, \nexists j' \text{ s.t. } \task{j'} \in R \text{ and } \task{j'} \prec \task{j} \}$
    \STATE $\text{For }\task{j} \in A, t_j' = \text{ earliest starting time on machine } m_j'$ s.t. ${m_j'} \in \ga{j}$
    \STATE $B = \{ \task{j}: j = \argmin_{\task{j'} \in A} \st{j'} \}$
    \STATE Choose $\task{j}$ from $B$ to start on machine $m_j'$ with a starting time $t_j'$ based on the given tie-breaking rule
    \STATE $\ma{j} = m_j', \st{j} = t_j'$
    \STATE $R \leftarrow R \setminus \{\task{j} \}$
\ENDWHILE
\end{algorithmic}
\end{algorithm}

GETF can be instantiated with different group assignment and tie-breaking rules. To understand how these rules work, consider a situation where the $m$ machines are divided into $K$ groups $M_1, M_2, \ldots, M_K$ by a group assignment rule. Let $\ga{j}$ denote the group of machines to which task $\task{j}$ can be assigned, $j = 1, \ldots, n$. Given this notation, a schedule under GETF consists of two mappings: a mapping $\ma{\cdot}$ from each task to its assigned machine and a mapping $\st{\cdot}$ from each task to its starting time.  Further, for any schedule with $\ma{\cdot}$ produced by GETF, $\ma{\cdot}$ of the produced schedule should be consistent with group assignment function $\ga{\cdot}$, i.e., $\ma{j} \in \ga{j}$ for each task $\task{j}$.%\shai{in general I'm not sure what this paragraph adds, other than notation, but it's not phrased in a way to reflect that it's introducting notation...}

The choice of the group assignment rule has a significant impact on the performance of GETF. Indeed, different group assignment functions are used for the goals of minimizing the makespan and total weighted completion time. While our results hold for any tie-breaking rule, different tie-breaking rules could provide meaningful improvements in real-world workloads. As it could be helpful to keep a specific tie-breaking rule in mind while considering the algorithm and proofs, the reader may find it helpful to consider random tie-breaking. Our technical results are based on the specific group assignment functions described in the following subsections.

\subsection{A Group Assignment Rule for Makespan} \label{section:group_assign_makespan}

The group assignment rule $\gam{\cdot}$ for the goal of minimizing the makespan that we focus on is adapted from SLS, which is designed for the setting \textit{without} communication time.  Specifically, machines of similar speeds are grouped together as follows. 

First, all the machines with speed less than a $\frac{1}{m}$ fraction of the speed of the fastest machine are discarded.  Then, the remaining machines are divided into $K$ groups $M_1, M_2, \ldots, M_K$ where $K = \lceil \log_\gamma m \rceil$, $\gamma = \loglogm$. Note that $K = \bigO(\loglogm)$. Given the removal of the slowest machines, we can assume that any remaining machine has speed within a factor of $\frac{1}{m}$ of the fastest machine. Without loss of generality, we assume the speed of the fastest machine is $m$ and the group $M_k$ contains machines with speeds in range $[\gamma^{k-1}, \gamma^k)$.

It may seem strange that some machines are discarded, but note that the total speed of discarded machines is not bigger than the speed of the fastest machine. So, if we consider the scheduling problem with zero communication time, removing these machines at most doubles the makespan in the worst case. 

After dividing machines into $K$ groups in the preprocessing step, we need to assign the machines. This step is more involved than the division. The design of the group assignment rule $\gam{\cdot}$ is based on the solution of a linear program (LP), which is a relaxed version of the following mixed integer linear program (MILP).
\begin{subequations}\label{opt:milp}
\begin{alignat}{3}
\min_{x_{i, j}, C_j, T} & \quad T \notag \\
\sum_{i} x_{i,j} & = 1 & \forall j \label{cons:assigned}\\
\nodeW{j} \sum_{i}  \frac{x_{i, j}}{\ps{i}} & \leq C_j & \forall j \label{cons:cap}\\
C_{j'} + \nodeW{j} \sum_{i}  \frac{x_{i, j}}{\ps{i}}  & \leq C_j & j^\prime \prec j \label{cons:prec}\\
\frac{1}{\ps{i}} \sum_{j} \nodeW{j} x_{i,j} & \leq T & \forall i \label{cons:cap_each_machine}\\
C_j & \leq T & \forall j \label{cons:completion_time}\\
x_{i, j} & \in \{0, 1\} & \forall i, j \label{cons:relaxed}
\end{alignat}
\end{subequations}

While the MILP is only designed to produce a group assignment rule, its optimal solution does not necessarily provide a feasible schedule. In the MILP, $x_{i, j} = 1$ if task $\task{j}$ is assigned to machine $\machine{i}$; otherwise $x_{i, j} = 0$. For each task $\task{j}$, $C_j$ denotes the completion time of task $\task{j}$. %Let $T$ denote the objective \shai{already defined in the MILP, also why is this here in any case? we don't refer to T again in this subsection} \yu{I agree that it is quite clear what T is in the MILP, but don't we need to explain in words what T is?}. 
Constraint \eqref{cons:assigned} ensures that every task is processed on some machine. For any task $\task{j}$, processing time $\nodeW{j} \sum_{i}  \frac{x_{i, j}}{\ps{i}}$ is bounded by its completion time as in constraint \eqref{cons:cap}. Constraint \eqref{cons:prec} enforces the precedence constraints between any predecessor-successor pair $(\task{j'}, \task{j})$. 
Constraint \eqref{cons:cap_each_machine} guarantees that the total load assigned to machine $\machine{i}$ is $\nodeW{j} \sum_{i}  \frac{x_{i, j}}{\ps{i}}$ and it should not be greater than the makespan.
% \shai{I prefer to have simply "Constraint 1d) guarantees that for all of them, instead of putting some at the end of the sentence, which makes things unnecessarily difficult for the reader}
Finally, constraint \eqref{cons:completion_time} states that the makespan should not be smaller than the completion time of any task.

Since we cannot solve the MILP efficiently, we relax it to form an LP by replacing constraint \eqref{cons:relaxed} with $x_{i, j} \geq 0$. Let $x^*, C^*, T^*$ denote the optimal solution of this LP. Note that $T^*$ provides a lower bound on $\opti$, the optimal makespan for the same problem with zero communication time.

For a set $M_k \subseteq M$ of machines, let $s(M_k)$ denote the total speed of machines in $M_k$, i.e., 
\begin{equation*}
    s(M_k) = \sum_{\machine{i} \in M_k} \ps{i}.
\end{equation*}
Define $x_{M_k, j}^*$ as the total fraction of task $\task{j}$ assigned to machines in set $M_k$:
\begin{equation*}
    x_{M_k, j}^* = \sum_{i \in M_k} x^*_{i, j}.
\end{equation*}
For any task $\task{j}$, define $\ell_j$ as the largest group index such that at least  half of the tasks are fractionally assigned to machines in groups $M_\ell, \ldots, M_K$: 
\begin{equation*}
    \ell_j = \max_\ell \ell \quad \text{s.t.} \quad \sum_{k=\ell}^K  x^*_{M_k, j} \geq \frac{1}{2}.
\end{equation*}
We note that any choice of constant above works for the purpose of our worst case analysis of GETF, but the choice can potentially have an impact on its empirical performance. Thus the choice of the parameter should be further optimized when applied in practice.
% \shai{change "Note" to "we note" or something, as we haven't analysed it yet, and the reader can't note it yet}
Each task $\task{j}$ is assigned to the group $\gam{j}$ that maximizes the total speed of machines in that group among candidates $M_{l_j}, \ldots, M_K$, i.e.,
\begin{equation*}
    \gam{j} = \argmax_{M_k :\ell_j \leq k \leq K} \quad s(M_k).
\end{equation*}

\subsection{A Group Assignment Rule for Total Weighted Completion Time}
The group assignment rule $\gaw{\cdot}$ for the goal of minimizing the total weighted completion time is similar in spirit to $\gam{\cdot}$ but is based on modified solutions of a different LP.  We divide machines into groups in the same way as in Section \ref{section:group_assign_makespan}. Without loss of generality, we assume that $\frac{\nodeW{j}}{\ps{i}} \geq 1$ for any task $\task{j}$ to be processed on any machine $\machine{i}$. Thus, we can divide the time horizon into the following time-indexed intervals of possible task completion times: $[1, 2], (2, 4], (4, 8], \ldots, (\tau_{\tlu-1}, \tau_\tlu]$ where $Q = \log{(\sum_j \frac{\nodeW{j}}{\min_i \ps{i}})}$ and $\tau_\tl = 2^\tl$ for $0 \leq \tl \leq \tlu$. Then, the MILP that forms the basis for the group assignment rule can be formulated as follows:
\begin{subequations}\label{weighted:milp}
\begin{alignat}{3}
\min_{x_{i, j, \tl}, C_j} \quad \sum_j \weight{j} C_j \notag \\
\sum_{i} \sum_{\tl} x_{i,j, \tl} & = 1 & \forall j \label{cons:done} \\
\nodeW{j} \sum_{i} \frac{1}{\ps{i}} \sum_\tl x_{i, j, \tl}  & \leq C_j & \forall j \label{cons:weighted_cap}\\
C_{j'} + \nodeW{j} \sum_{i} \frac{1}{\ps{i}} \sum_\tl x_{i, j, \tl}  & \leq C_j & j^\prime \prec j \label{cons:weighted_prec}\\
\sum_{t=1}^\tl \sum_i x_{i,j,t} - \sum_{t=1}^\tl \sum_i x_{i,j^\prime,t} & \leq 0  & \forall \tl, j^\prime \prec j \label{cons:weighted_prec_all}\\
\sum_\tl \tau_{\tl-1} \sum_i x_{i,j,\tl} & < C_j & \forall j \label{cons:left_index}\\
% \sum_\tl \tau_{\tl} \sum_i x_{i,j,\tl} & \geq C_j & \forall j \label{cons:right_index}\\
\frac{1}{\ps{i}} \sum_j \nodeW{j} \sum_{t=1}^\tl x_{i,j,t} & \leq \tau_\tl & \forall i, \tl \label{cons:index_cap}\\
x_{i, j, \tl} & \in \{0, 1\} & \forall i, j, \tl \label{cons:binary}
\end{alignat}
\end{subequations}

Again, the MILP is only designed to find a group assignment rule and thus its optimal solution does not necessarily produce a feasible schedule. Here, $x_{i,j,\tl} = 1$ if task $\task{j}$ is assigned to machine $\machine{i}$ and it completes in the $\tl$th interval $(\tau_{\tl-1}, \tau_\tl]$. For each task $\task{j}$, $C_j$ denotes the completion time of task $\task{j}$ and $\weight{j}$ represents its weight in the objective of total weighted completion time. Constraint \eqref{cons:done} enforces that each task will be assigned to some machine. Constraint \eqref{cons:weighted_cap} guarantees that the completion time of a task is not smaller than its processing time. Constraints~\eqref{cons:weighted_prec} and~\eqref{cons:weighted_prec_all} together enforce the precedence constraint for every predecessor-successor pair. Constraint \eqref{cons:left_index} guarantees that the completion time of task $\task{j}$ is not  smaller than the left boundary of the $\tl$th interval $(\tau_{\tl-1}, \tau_\tl]$. The total load assigned to machine $\machine{i}$ up to $\tl$th interval is $\frac{1}{\ps{i}} \sum_j \nodeW{j} \sum_{t=1}^\tl x_{i,j,t}$, and  it should not be greater than the upper bound $\tau_\tl$ as enforced in constraint \eqref{cons:index_cap}.
% \shai{much better than the previous, except the last line}

To define the group allocation rule, we relax  constraint \eqref{cons:binary} to form an LP. As in the previous section, let $x^*, C^*$ denote the optimal solution for this LP. Note that $\sum_j \weight{j} C^*_j$ provides a lower bound for $\wopti$. For any task $\task{j}$, define $\tl(j)$ as the the minimum value of $\tl$ such that both
$\sum_{t=1}^{\tl} \sum_i x^*_{i,j,t} \geq \frac{1}{2}$ and $C^*_j \leq 2^\tl$ are satisfied. Intuitively, $\tl(j)$ can be viewed as a rough estimate of the completion time of task $\task{j}$. Define $\alpha (j)$ as the total fraction of task $\task{j}$ over any machine in the first $\tl(j)$ intervals with respect to solution $x^*$:
\begin{equation*}
    \alpha_j = \sum_{t=1}^{\tl(j)} \sum_i x^*_{i,j,t}.
\end{equation*}

We construct a set of feasible solutions $\tilde{x}$ based on the optimal solution $x^*$ for the LP:
\begin{equation} \label{eq:add_chill}
    \tilde{x}_{i,j} = \sum_{\tl=1}^{\tl(j)} \frac{x^*_{i,j,\tl}}{\alpha_j} \quad \forall i, j.
\end{equation}
Notice that the group assignment rule $\gaw{\cdot}$ is of the same form as $\gam{\cdot}$, with $\tilde{x}$ replacing $x^*$. For  task \task{j}, define $\tilde{\ell}_j$ as before but with respect to $\tilde{x}$ instead of $x^*$:
\begin{equation*}
    \tilde{\ell}_j = \max_\ell \ell \quad \text{s.t.} \quad \sum_{k=\ell}^K  \tilde{x}_{M_k, j} \geq \frac{1}{2}.
\end{equation*}
The group assignment rule $\gaw{\cdot}$ for the goal of minimizing the total weighted completion time follows as below:
\begin{equation*}
    \gaw{j} = \argmax_{M_k :\tilde{\ell}_j \leq k \leq K} \quad s(M_k).
\end{equation*}

\subsection{A Comparison of GETF and SLS} \label{section:comparison}
% \shai{the title reads weird to me... Maybe "a comparison of...". Not a big deal... you can keep it as is if you like it}
\begin{figure}[htp]
\centering
\subfloat[]{%
  \includegraphics[width=0.45\textwidth]{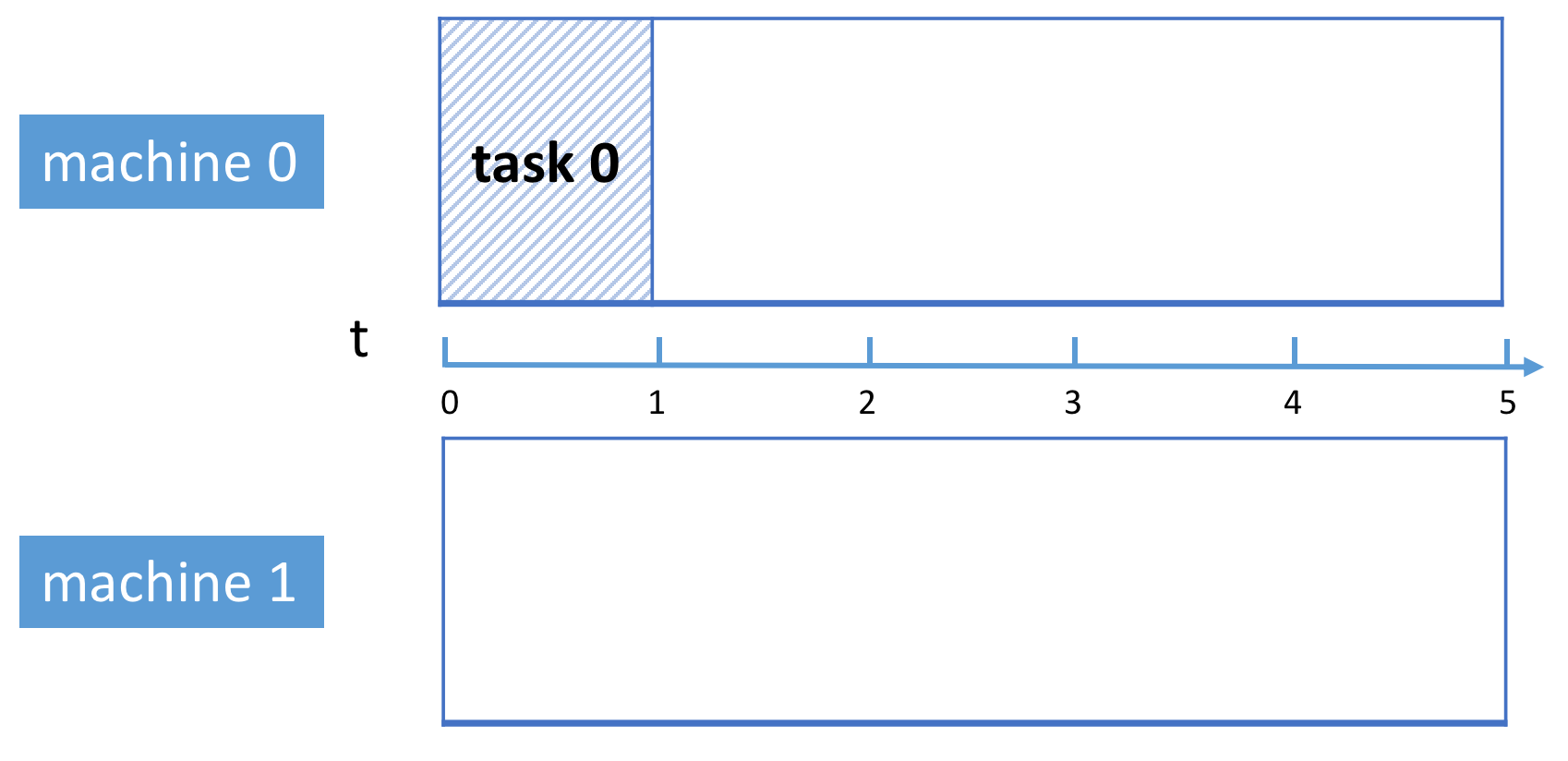}%
}%
\subfloat[]{%
  \includegraphics[width=0.45\textwidth]{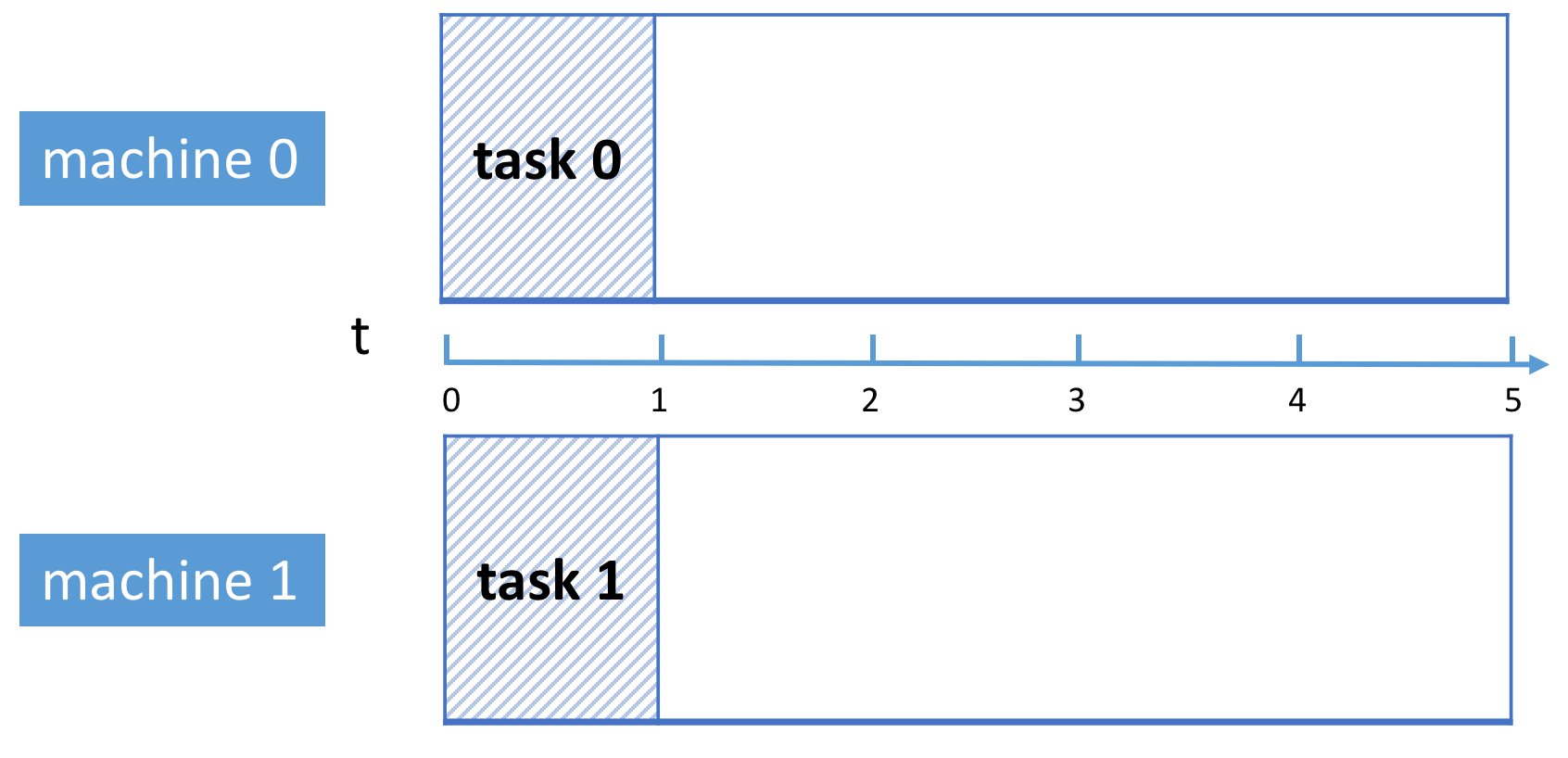}%
}

\subfloat[]{%
  \includegraphics[width=0.45\textwidth]{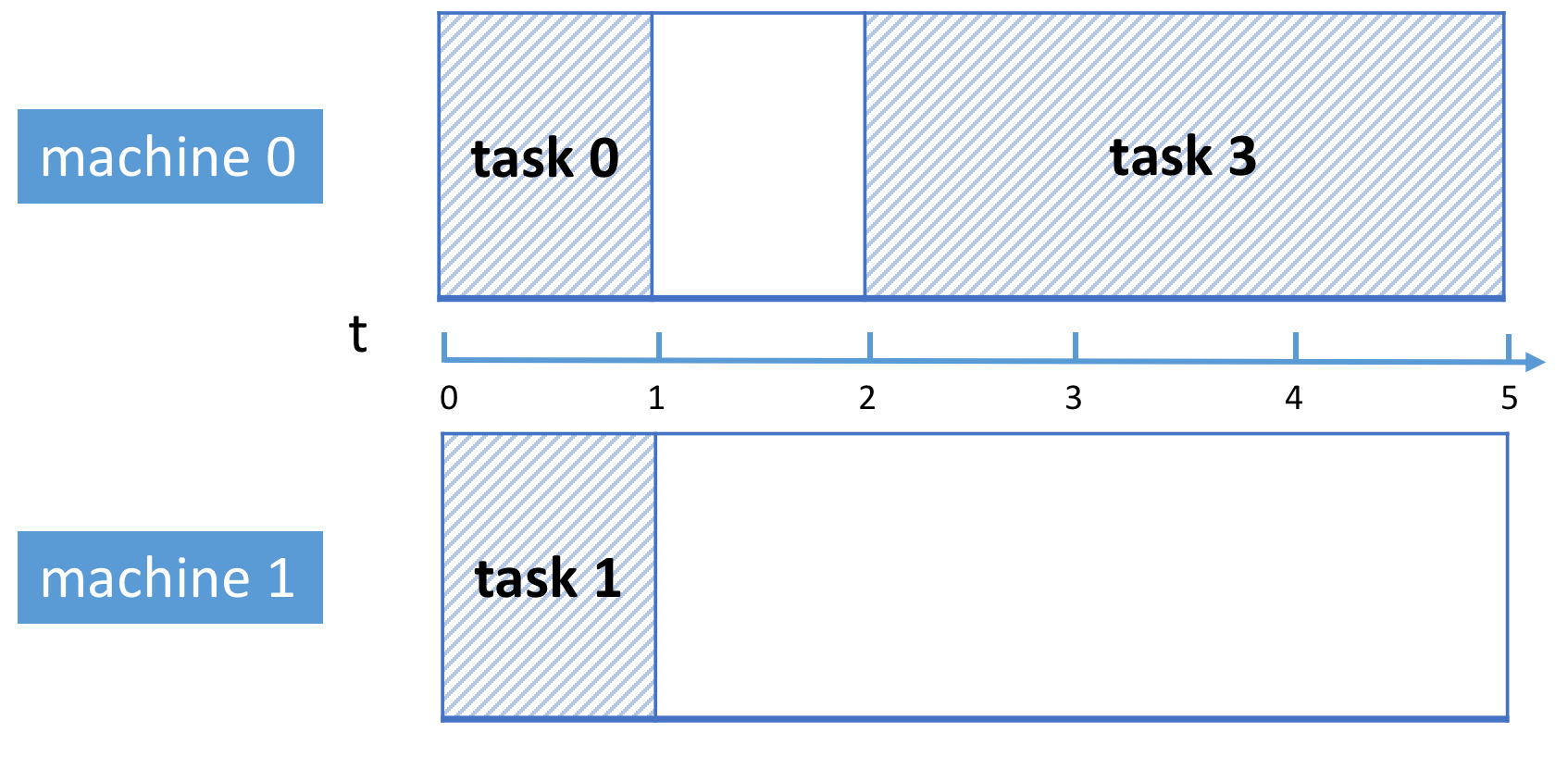}%
}%
\subfloat[]{%
  \includegraphics[width=0.45\textwidth]{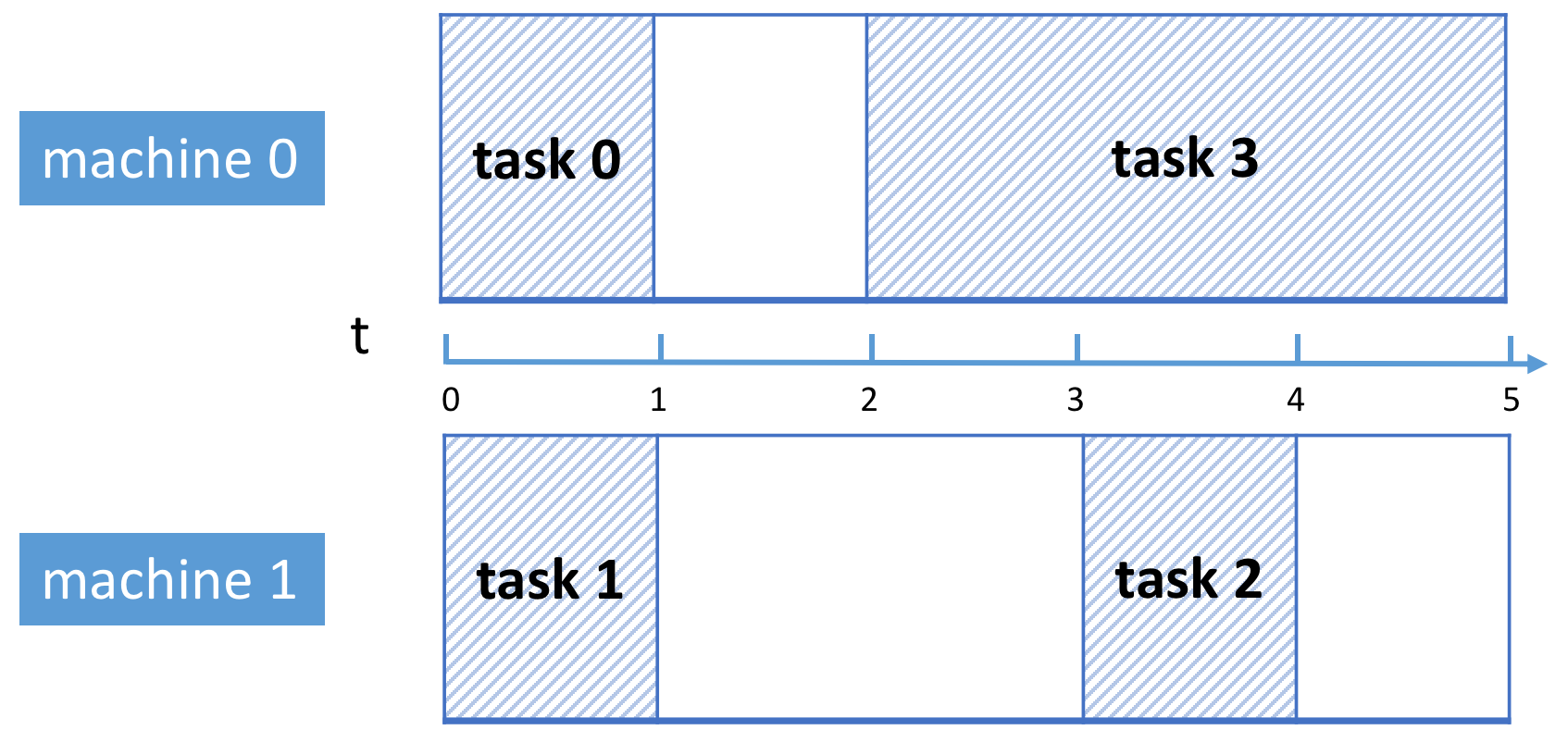}%
}

\caption{An illustration of GETF running on Example \ref{example:etf}. (a)-(d) show the first four iterations. \label{diagram:etf}}

\end{figure}

\begin{figure}[htp]
\centering
\subfloat[]{%
  \includegraphics[width=0.45\textwidth]{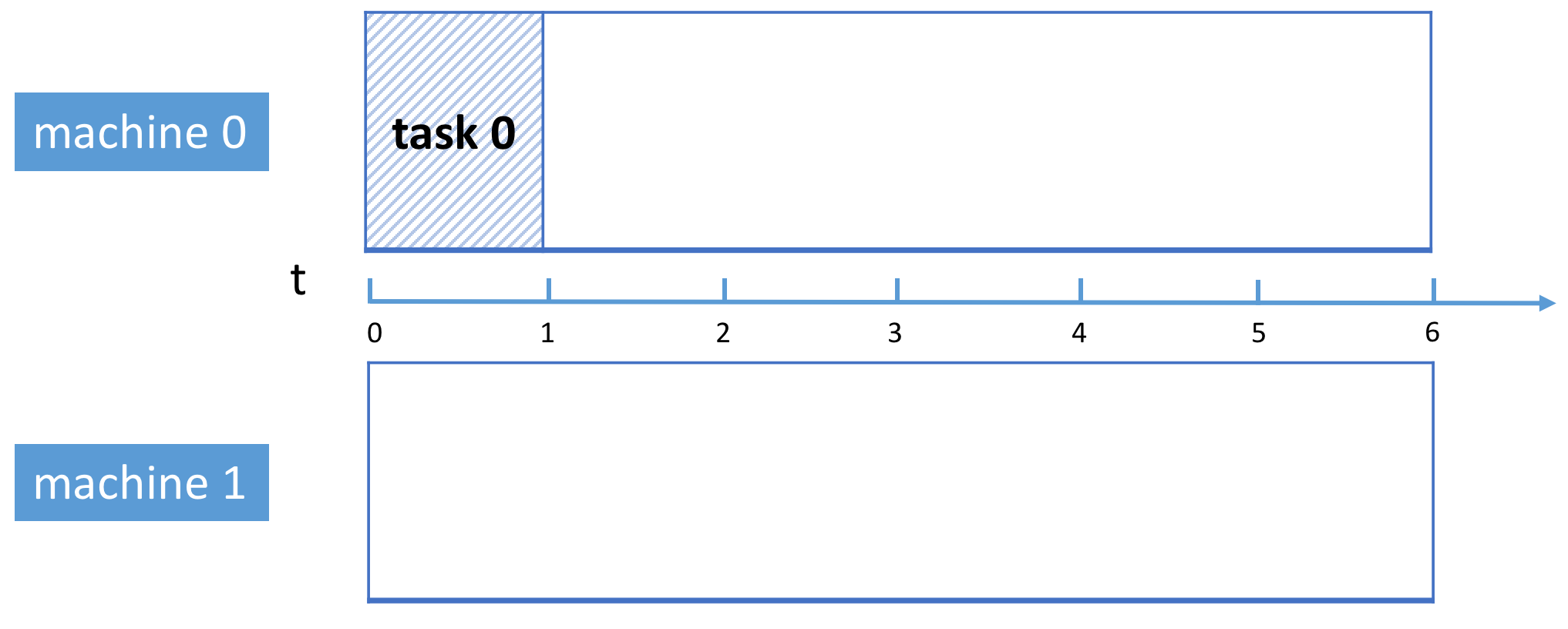}%
}%
\subfloat[]{%
  \includegraphics[width=0.45\textwidth]{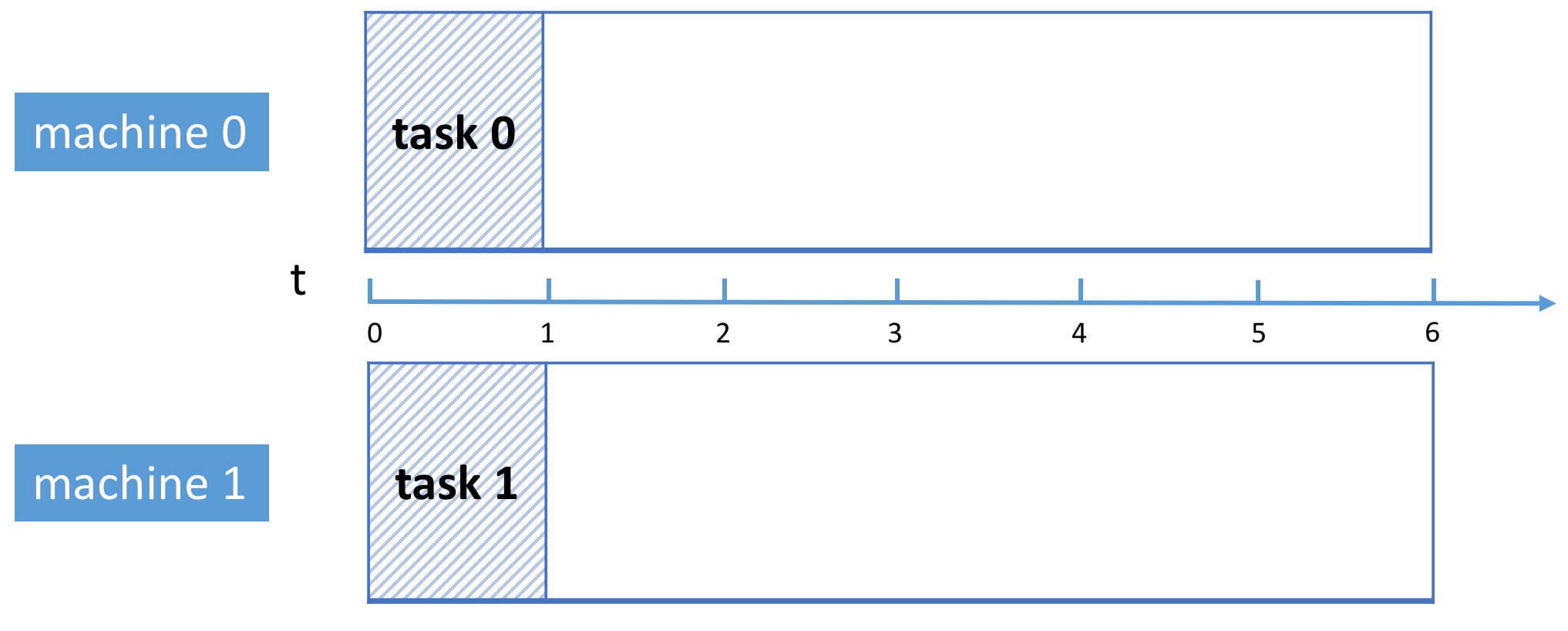}%
}

\subfloat[]{%
  \includegraphics[width=0.45\textwidth]{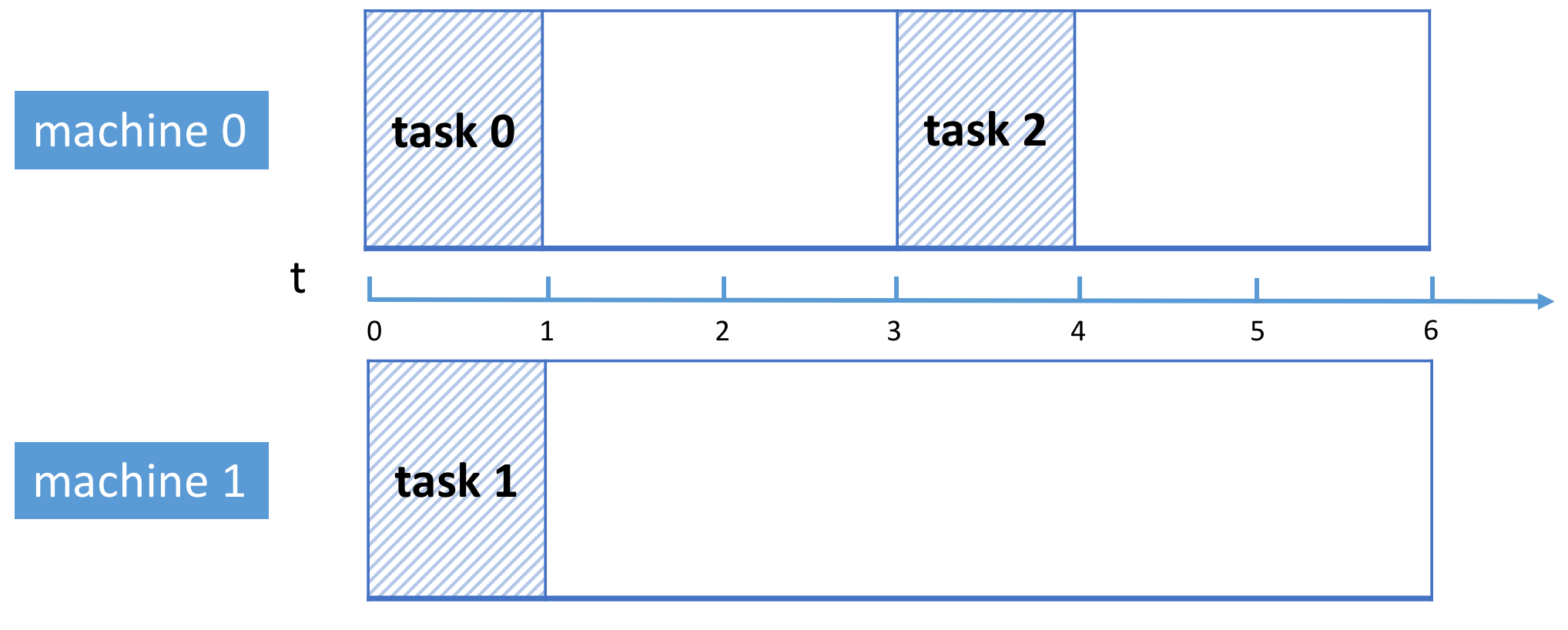}%
}%
\subfloat[]{%
  \includegraphics[width=0.45\textwidth]{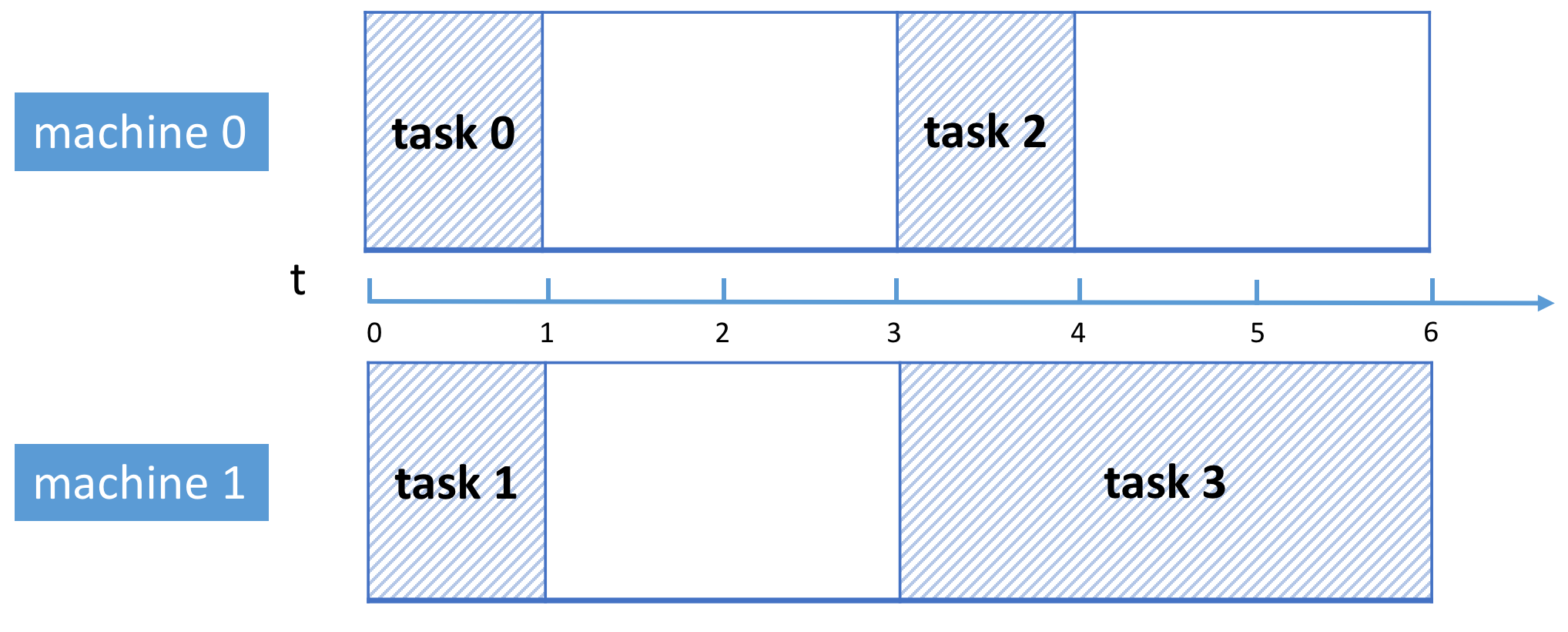}%
}

\caption{An illustration of SLS running on Example \ref{example:etf}.  (a)-(d) show the first four iterations. \label{diagram:ls}}

\end{figure}

The description of GETF above highlights that it combines the greedy heuristic of ETF with the speed-based assignment heuristic of SLS.  This enables GETF to provide guarantees for settings with both heterogeneous processing rates and communication delays. In contrast, SLS does not provide guarantees in settings with communication time.  This is a result of the fact that SLS is based on list scheduling and does not always schedule the earliest task first, thus making it impossible to bound the overall idle time in between tasks.

% \begin{figure}[htp]
%     % \centering
    
%     \begin{minipage}[t]{0.4\linewidth}
%     \centering
%         \begin{subfigure}{\linewidth}
%             \includegraphics[width=0.8\linewidth]{diagrams/etf0.png}
%             \caption{}
%         \end{subfigure}
%         % \hfill
%         \begin{subfigure}{\linewidth}
%             \includegraphics[width=0.8\linewidth]{diagrams/etf1.png}
%             \caption{}
%         \end{subfigure}
%         \begin{subfigure}{\linewidth}
%             \includegraphics[width=0.8\linewidth]{diagrams/etf2.png}
%             \caption{}
%         \end{subfigure}
%         \begin{subfigure}{\linewidth}
%             \includegraphics[width=0.8\linewidth]{diagrams/etf3.png}
%             \caption{}
%         \end{subfigure}
%     \caption{etf}
%     \end{minipage}
%     % \hspace{\stretch{2}}%
%     \begin{minipage}[t]{0.4\linewidth}
%     \centering
%         \begin{subfigure}{\linewidth}
%             \includegraphics[width=0.8\linewidth]{diagrams/ls0.png}
%             \caption{}
%         \end{subfigure}
%         % \hfill
%         \begin{subfigure}{\linewidth}
%             \includegraphics[width=0.8\linewidth]{diagrams/ls1.png}
%             \caption{}
%         \end{subfigure}
%         \begin{subfigure}{\linewidth}
%             \includegraphics[width=0.8\linewidth]{diagrams/ls2.png}
%             \caption{}
%         \end{subfigure}
%         \begin{subfigure}{\linewidth}
%             \includegraphics[width=0.8\linewidth]{diagrams/ls3.png}
%             \caption{}
%         \end{subfigure}
%     \caption{etf}
%     \end{minipage}
% \end{figure}

To illustrate the difference between GETF and SLS, we provide a simple example of scheduling a job made up of four tasks.

\begin{example} \label{example:etf}  \textit{
We consider a job made up of four tasks, $\task{0}, \task{1}, \task{2}, \task{3}$ with processing demands $1, 1, 1, \text{ and } 3$ that are to be scheduled on a set of two identical machines with the same processing speed equal to $1$. The weight for the edges in the graph are listed as below: $\edgeW{0}{2} = \edgeW{0}{3} = \edgeW{1}{2} = 2, \edgeW{1}{3} = 1$. We assume $\cs{i}{j} = 1$ for $i \neq j$; otherwise $\cs{i}{i} = 2$ for $i=0,1$.  }

\textit{The schedules of GETF and SLS are illustrated in Figures \ref{diagram:etf} and \ref{diagram:ls}. Note that, since the servers are identical, the group assignment rule does not play a role in these examples.  Given a priority list $(\task{0}, \task{1}, \task{2}, \task{3})$, a possible schedule produced by SLS puts tasks $\task{0}$ and $\task{2}$ on machine $\machine{0}$ and assigns the rest of tasks to machine $\machine{1}$ as demonstrated in Figure \ref{diagram:ls}. A terminal chain for the given schedule is task $\task{1}$ followed by task $\task{3}$, and the idle time of length $2$ between the end of task $\task{1}$ and the start of task $\task{3}$ on machine $\machine{1}$ is not bounded by the communication time between task $\task{1}$ and $\task{3}$. In contrast, task $\task{3}$  starts earlier on machine $\machine{0}$ in a schedule produced by GETF, see Figure \ref{diagram:etf}. List scheduling does not always schedule the earliest task at each step, thus making the idle time on machine $\machine{1}$ not necessarily bounded by communication time between task $\task{1}$ and task $\task{3}$.  Our proofs in Section \ref{section:main} highlight that maintaining a tight bound on the communication time between tasks is crucial to achieving a good approximation ratio in settings with machine-dependent communication time.}
\end{example}

\section{Results}
Our main results bound the approximation ratio of GETF in settings with related machines and heterogeneous communication time for  the goals of minimizing the makespan and minimizing the total weighted completion time. %\sout{This is the first algorithmic framework with a provable approximation ratio in this setting for either performance measure.}\shai{if we  had not impressed this upon the reader by the end of the intro, now is too late. If we had, this is unnecessarily repetitious. If you feel it wasn't stressed enough in the intro, stress it more in the intro}

\subsection{Makespan} \label{section:main}
In the case of minimizing the makespan, our main result provides a bound in terms of the communication time of a terminal chain of the schedule. Specifically, let $\tc: \task{c_1} \prec \task{c_2} \prec \ldots \prec \task{c_N}$ be a terminal chain for the schedule and define $C$ as the communication time over such a chain in the worst case, i.e. 
\begin{equation*}
C = \sum_{j=2}^{N}  \frac{\edgeW{c_{j-1}}{c_{j}}}{ \sbar{j-1}{j}},
\end{equation*}
where $\sbar{j-1}{j}$ is defined as the slowest speed between $\ma{c_{j-1}}$,  the machine assigned to $\task{c_{j-1}}$ and any machine in the group ${\ga{c_{j}}}$,  i.e.,
\begin{equation*}
\sbar{j-1}{j} = \min_{i \in {\ga{c_{j}}}} \cs{\ma{c_{j-1}}}{i}.
\end{equation*}
Note that $C$ can be computed efficiently and minimized over all the terminal chains using dynamic programming and that the tie-breaking rule can have an impact on $C$ due to its impact on terminal chains.

\begin{theorem}\label{theorem:main}
For any schedule $\mathcal{S}$ produced by GETF with group assignment rule, $\gam{\cdot}$ 
\begin{equation*}
    C_{max} (\mathcal{S}) \leq \bigO(\loglogm) \opti + C,
\end{equation*}
where $\opti$ is the optimal schedule length obtained if communication time for all pairs were zero.
\end{theorem}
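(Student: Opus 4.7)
The plan is to follow a terminal chain $\tc: \task{c_1} \prec \task{c_2} \prec \cdots \prec \task{c_N}$ of the produced schedule $\mathcal{S}$ and account for the entire interval $[0, C_{max}(\mathcal{S})]$ by splitting it, relative to $\tc$, into three kinds of time: \emph{(i)} processing intervals during which some $\task{c_j}$ is being executed on its assigned machine $\ma{c_j}$; \emph{(ii)} communication intervals during which $\task{c_{j+1}}$ is waiting for the output of $\task{c_j}$ to arrive on some machine in $\ga{c_{j+1}}$; and \emph{(iii)} blocked intervals during which $\task{c_{j+1}}$ is already ready (its predecessors done and their data transferred) but GETF has not yet started it. Concretely, I would let $t_j$ denote the completion time of $\task{c_j}$, let $\sigma_j = \edgeW{c_j}{c_{j+1}}/\sbar{j}{j+1}$ be the chain communication delay, and classify each $t \in [t_j, t_{j+1})$ as type (ii) on the subinterval of length at most $\sigma_j$ following $t_j$, type (i) on the $\nodeW{c_{j+1}}/\ps{\ma{c_{j+1}}}$ final units, and type (iii) in between.

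The core lemma I would prove is a \textbf{Separation Principle}: on every type-(iii) interval attached to $\task{c_{j+1}}$, every machine in the group $\ga{c_{j+1}}$ is busy executing some task. This is immediate from the GETF rule: $\task{c_{j+1}}$ is in the ready set $A$, and if some $\machine{i} \in \ga{c_{j+1}}$ were idle at such a time, the earliest start time $t_{c_{j+1}}'$ would equal the current time, so $\task{c_{j+1}}$ would have been chosen in $B$ and started immediately, contradicting that the interval is type (iii). Consequently, the union of all blocked intervals can be charged to processing work on machines in the chain groups $\ga{c_1}, \ldots, \ga{c_N}$.

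With this in hand the three accountings are: the total length of type (ii) intervals is exactly $\sum_j \sigma_j = C$ by definition; the total length of type (i) intervals, $\sum_{j=1}^N \nodeW{c_j}/\ps{\ma{c_j}}$, is bounded by $O(\loglogm)\,\opti$ because the group $\ga{c_j}$ is chosen among $M_{\ell_{c_j}}, \ldots, M_K$ to maximize group speed, and by constraints \eqref{cons:prec} the fractional work assigned by $x^*$ along the chain is at most $T^* \le \opti$, so the chain round-up factor is exactly the $O(\loglogm)$ ratio between the actual group speed and the fractional LP speed that \cite{li2017scheduling} extracts; and the total length of type (iii) intervals is bounded using the Separation Principle by $\sum_k (1/s(M_k)) \cdot (\text{work of tasks processed in } M_k \text{ during blocked periods})$, which via constraint \eqref{cons:cap_each_machine} and the same group-speed argument is also $O(\loglogm)\,\opti$.

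The main technical obstacle will be the final step: turning the Separation Principle into a quantitative bound on the total blocked time without double-counting across different chain positions $j$ and different groups. The natural approach is to assign, to each blocked interval attached to $\task{c_{j+1}}$, the load of tasks in $\ga{c_{j+1}}$ that overlap it, and then aggregate by group using the LP load constraint \eqref{cons:cap_each_machine}; the fact that $\ell_{c_j}$ is defined with respect to a $1/2$-mass cutoff is what lets us absorb the constants. Once this charging is in place, summing the three contributions yields $C_{max}(\mathcal{S}) \le \bigO(\loglogm)\,\opti + C$, and the Separation Principle makes the argument essentially independent of whether we are in the identical-machines or related-machines regime, recovering the \cite{hwang1989scheduling} bound as a special case and extending it via the $\gam{\cdot}$ rule to related machines.
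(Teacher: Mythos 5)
Your proposal follows essentially the same route as the paper: a terminal chain, a Separation Principle that uses the greedy rule of GETF to bound the idle time in the gaps between consecutive chain tasks by the chain's communication delay and charges the remaining busy time to per-group loads, and then the LP-based bounds $P \leq 2\gamma T^*$ and $\sum_k D_k \leq 2K T^*$ together with $T^* \leq \opti$. The only cosmetic difference is that you classify time pointwise into communication/blocked/processing intervals, whereas the paper partitions the inter-task gaps by machine group and bounds the \emph{total} (possibly non-contiguous) idle time on each machine of the group by the communication delay; the aggregate accounting is identical.
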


Theorem \ref{theorem:main} represents the first result for makespan in the setting of related machines and heterogeneous communication time, addressing a problem that has been open since ETF was introduced for identical machines thirty years ago. Additionally, it matches the state-of-the art results for the case without communication time, where the best known approximation ratio is $\bigO(\loglogm)$ \cite{li2017scheduling}, and the case with communication time but identical machines, where the best known approximation ratio is $(2 -\frac{1}{m}) \opti + C$ \cite{hwang1989scheduling}.

Concretely, in the special case of identical machines, the group assignment rule $\gam{\cdot}$ is no longer required when implementing GETF since all machines share the same speed and so there is only one group of machines.  Thus, GETF reduces to ETF.  The theorem makes use of $C'$ which is defined as
$$C' = \frac{1}{m} \sum_{j=2}^{N} \sum_{i=1}^m \frac{\edgeW{c_{j-1}}{c_{j}}}{s_{\ma{c_{j-1}}, i}}.$$ Note that $C'$ differs from $C$ since it is an average over the terminal chain.  The result we obtain in this case is the following, which matches the current state-of-the-art result of \cite{hwang1989scheduling}.

\begin{proposition}\label{prop:identical}
Consider a setting with $m$ identical machines.  For any schedule $\mathcal{S}$ produced by GETF,
\begin{equation*}
    C_{max} (\mathcal{S}) \leq \left(2 -\frac{1}{m}\right) \opti + C', 
\end{equation*}
where $\opti$ is the optimal schedule length obtained if communication time for all pairs were zero.
\end{proposition}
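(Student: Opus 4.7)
The plan is to prove Proposition~\ref{prop:identical} by combining Graham's load-balancing idea with a charging argument based on a simple \emph{Separation Principle}, exactly along the lines hinted at in the contributions paragraph. On identical machines, the group assignment rule $\gam{\cdot}$ collapses to a single group, so GETF degenerates to ETF and the group-partitioning apparatus is unused.

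First I would fix a terminal chain $\task{c_1} \prec \task{c_2} \prec \ldots \prec \task{c_N}$ exactly as defined in Section~\ref{section:model}: start from a task $\task{c_N}$ that completes at time $C_{max}(\mathcal{S})$ and walk backward to the latest-finishing immediate predecessor at each step. Write $W_{\mathsf{chain}} = \sum_{j=1}^{N} \nodeW{c_j}$. Because the chain is a precedence chain and the machines have a common speed (WLOG equal to $1$), we have $W_{\mathsf{chain}} \leq \opti$, and the usual total-work bound gives $\sum_{j=1}^{n} \nodeW{j} \leq m \cdot \opti$.

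Second, I would prove the Separation Principle: at any instant $t \in [0, C_{max}(\mathcal{S})]$ and any machine $m'$ that is idle at $t$, either (a) some chain task $\task{c_j}$ is being processed on some machine at time $t$, or (b) $t$ lies in a communication window for some consecutive chain pair $(\task{c_{j-1}}, \task{c_j})$, in the sense that $\task{c_{j-1}}$ has finished but the data transfer to $m'$ has not, i.e.\ $t - f_{c_{j-1}} < \edgeW{c_{j-1}}{c_j}/\cs{\ma{c_{j-1}}}{m'}$. The justification is that ETF greedily chooses the earliest feasible (task, machine) pair; so if no chain task is running and $m'$ is idle, the ``bottleneck'' chain task $\task{c_j}$ that is not yet started must be waiting for the data from its critical predecessor $\task{c_{j-1}}$, which by construction of the chain is precisely the one whose finish-plus-communication time dictates the ready time of $\task{c_j}$. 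A short backward induction from $c_N$ down to $c_1$ pins down which chain index $j$ is responsible at every idle instant.

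Third, I would charge the total idle machine-time $I$, which satisfies $m \cdot C_{max}(\mathcal{S}) = \sum_j \nodeW{j} + I$. Idle time of type (a) contributes at most $(m-1) \cdot W_{\mathsf{chain}}$, since while one machine executes a chain task at most $m-1$ others can be simultaneously idle. For type (b), on machine $m'$ during gap $j$ the inequality above yields an idle duration of at most $\edgeW{c_{j-1}}{c_j}/\cs{\ma{c_{j-1}}}{m'}$, so summing over machines and chain pairs gives a total of at most $\sum_{j=2}^{N} \sum_{m'=1}^{m} \edgeW{c_{j-1}}{c_j}/\cs{\ma{c_{j-1}}}{m'} = m \cdot C'$. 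Combining,
\begin{equation*}
m \cdot C_{max}(\mathcal{S}) \;\leq\; \sum_j \nodeW{j} + (m-1) W_{\mathsf{chain}} + m C' \;\leq\; m \opti + (m-1) \opti + m C',
\end{equation*}
and dividing by $m$ yields $C_{max}(\mathcal{S}) \leq (2 - 1/m)\opti + C'$, as required.

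The main obstacle is making the Separation Principle fully rigorous: I need to guarantee that every idle (machine, time) pair is charged to exactly one of the two categories and, within category (b), to a single gap $j$, so that the aggregated idle machine-time bound is not inflated beyond the $(m-1)W_{\mathsf{chain}} + mC'$ budget. The right framing is to process idle times chronologically, identifying at each $t$ the minimum chain index $j$ whose predecessor $\task{c_{j-1}}$ has finished but which itself has not yet started; ETF's greedy invariant then forces a clean dichotomy into case (a) or case (b) and prevents double counting.
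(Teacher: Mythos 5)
Your proposal is correct and follows essentially the same route as the paper's proof: construct a terminal chain, use ETF's greediness to bound the idle time on each machine within each chain gap by the corresponding communication time, and then combine the total-work bound $\sum_j \nodeW{j} \leq m\,\opti$ with the chain-work bound $W_{\mathsf{chain}} \leq \opti$; your charging of idle machine-time is just the transposed form of the paper's decomposition of $[0, C_{max}]$ into chain-processing intervals and gaps $I(c_{j-1},c_j)$, and the final algebra is identical. The only cosmetic difference is that the paper states the gap bound as a bound on the \emph{measure} of the idle set $E(c_{j-1},c_j,i)$ rather than your pointwise "communication window" condition, which is the cleaner way to make your step (b) rigorous.
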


\subsection{Total Weighted Completion Time} \label{section:weighted}

Similarly to the  makespan case, we provide a bound with respect to the communication time of chains. However, since total weighted completion time depends on the completion time of every task (instead of just one task as in the case of makespan), the communication time of terminal chains of many subsets of the DAG show up in the bound. More formally, assume that the tasks are indexed with respect to their order in the schedule determined by GETF, denoted by $\mathcal{S}$. At iteration $j$, task $\task{j}$ is to be scheduled.  Let $G(\mathcal{S}, j)$ denote a DAG formed by a set of the tasks that have been scheduled so far and the corresponding edges within these tasks. Define $\mathcal{S}(j)$ to be a subset of the given schedule $\mathcal{S}$ up to iteration $j$, i.e., it is a schedule for DAG $G(\mathcal{S}, j)$. This definition ensures that task $\task{j}$ is one of the tasks that ends last in the schedule $\mathcal{S}(j)$.  Now, let $\tc (\mathcal{S}, j): \task{c_1} \prec \task{c_2} \prec \cdots \prec \task{c_{N_j}}$ be a terminal chain that ends with task $\task{j} = \task{c_{N_j}}$ in the schedule $\mathcal{S}(j)$, and define $C(\mathcal{S}, j)$ 
% \note{(this notation is chosen to be consistent with notation used in the literature, e.g. \cite{hwang1989scheduling})} \adam{I don't think this phrase is needed}
as the communication time over such a chain in the worst case, i.e.,
\begin{equation*}
C(\mathcal{S}, j) = \sum_{j'=2}^{N_j}  \frac{\edgeW{c_{j'-1}}{c_{j'}}}{ \sbar{j'-1}{j'}}.
\end{equation*}
This definition of $C(\mathcal{S}, j)$ generalizes the notion of $C$ used in Theorem \ref{theorem:main} for makespan and plays a similar role in the theorem below. 

\begin{theorem} \label{theorem:weighted_main}
For any schedule $\mathcal{S}$ produced by GETF with group assignment rule $\gaw{\cdot}$,
\begin{equation*}
    \sum_j \weight{j} C_j \leq \bigO(\loglogm) \wopti + \sum_j \omega_j C(\mathcal{S}, j),
\end{equation*}
where $\wopti$ is the optimal total weighted completion time obtained if communication time for all pairs was zero.
\end{theorem}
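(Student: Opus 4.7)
The plan is to lift the global Separation-Principle argument behind Theorem \ref{theorem:main} into a per-task argument, since $\sum_j \weight{j} C_j$ decomposes task-by-task. For each $\task{j}$, I would restrict attention to the sub-schedule $\mathcal{S}(j)$ of $\mathcal{S}$ containing exactly those tasks that have been placed by GETF by the time $\task{j}$ is placed. Because GETF is greedy on earliest start time, $\task{j}$ is by construction one of the tasks that ends last in $\mathcal{S}(j)$, so every terminal chain of $\mathcal{S}(j)$ that ends at $\task{j}$ is a legitimate object to which the Separation Principle applies. Exactly as in the makespan analysis, this will yield a pointwise bound of the shape
\begin{equation*}
C_j \;\leq\; \Phi(\mathcal{S}(j)) \;+\; C(\mathcal{S}, j),
\end{equation*}
where $\Phi(\mathcal{S}(j))$ is a ``communication-free'' processing contribution, measured along $\mathcal{S}(j)$, that aggregates idle plus busy time on the group $\gaw{j}$ assigned to $\task{j}$.

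The second step is to show $\Phi(\mathcal{S}(j)) \le O(\log m / \log\log m)\,C^*_j$, where $C^*_j$ is the LP completion time from the relaxation of \eqref{weighted:milp}. This is where the specific construction of $\gaw{\cdot}$ via $\tilde{x}$ and $\tl(j)$ does the work. By the definition of $\tl(j)$, at least half of $\task{j}$'s LP mass sits inside intervals with right endpoint at most $2^{\tl(j)} \le 2 C^*_j$, so $\tilde{x}_{\cdot,j}$ is a probability distribution supported on machines whose aggregate load from \eqref{cons:index_cap} is bounded by $\tau_{\tl(j)}$. The group $\gaw{j}$ is then chosen to contain at least a $\Theta(1/\log_\gamma m)$ fraction of that speed mass (by the maximization over $M_{\tilde{\ell}_j},\ldots,M_K$), exactly mirroring the SLS-style speed-budgeting in Section \ref{section:group_assign_makespan}. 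Running through the same busy/idle accounting as in Theorem \ref{theorem:main}, but localized to the tasks appearing in $\mathcal{S}(j)$ and using $C^*_j$ in place of $T^*$, gives the claimed $O(\log m/\log\log m)$ factor against $C^*_j$.

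The final step is to sum over tasks: multiply the per-task inequality by $\weight{j}$ and add. The processing-side contribution telescopes into $O(\log m/\log\log m) \sum_j \weight{j} C^*_j$, and $\sum_j \weight{j} C^*_j \le \wopti$ because the LP is a relaxation of the zero-communication MILP \eqref{weighted:milp}. The communication-side contribution is already $\sum_j \weight{j} C(\mathcal{S}, j)$, matching the statement of Theorem \ref{theorem:weighted_main}.

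The main obstacle I anticipate is the per-task localization of the Separation Principle: in the makespan setting one only needs a single terminal chain and a single ``group-$k$'' accounting for the whole schedule, but here one must justify the same bookkeeping inside every sub-schedule $\mathcal{S}(j)$, and in particular argue that the load used by tasks scheduled before $\task{j}$ on machines in $\gaw{j}$ can be charged to $C^*_j$ rather than to some later LP completion time. This is where constraints \eqref{cons:weighted_prec_all} and \eqref{cons:index_cap} become crucial, because they force the LP mass of any predecessor of $\task{j}$ to lie in intervals ending no later than $2^{\tl(j)}$; without this, the localized busy-time accounting would not telescope cleanly against $C^*_j$. Handling this cleanly, together with the factor-$2$ slack introduced by the rescaling $\tilde{x} = x^*/\alpha_j$ in \eqref{eq:add_chill}, is the technical heart of the argument.
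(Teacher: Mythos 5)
Your overall architecture matches the paper's: apply the Separation Principle to each sub-schedule $\mathcal{S}(j)$ to obtain $C_j \le P(\mathcal{S},j) + \sum_k D_k(\mathcal{S},j) + C(\mathcal{S},j)$, bound the processing terms by $O(\log m/\log\log m)\,C^*_j$ via the group assignment rule, and sum with weights using $\sum_j \weight{j} C^*_j \le \wopti$. But the step you defer as ``the technical heart'' is precisely the one new device the paper has to supply, and your sketch of it --- running ``the same busy/idle accounting as in Theorem \ref{theorem:main} \ldots\ using $C^*_j$ in place of $T^*$'' --- does not go through as stated. Lemmas \ref{lemma:p} and \ref{lemma:D_k} are proved against the makespan LP \eqref{opt:milp}, which has a single per-machine load bound $T$; the time-indexed LP \eqref{weighted:milp} has no such variable, only the per-interval constraints \eqref{cons:index_cap}, so there is nothing to substitute $C^*_j$ into.

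The paper bridges this by a level-set decomposition you do not have: partition the tasks into $\js_\tl = \{j : \tl(j)=\tl\}$; show (Lemma \ref{lemma:feasibility}) that $\tilde{x}_{i,j}=\sum_{t\le \tl}x^*_{i,j,t}/\alpha_j$, $\tilde{C}_j=2C^*_j$, $\tilde{T}=2^{\tl+1}$ is feasible for the makespan LP restricted to $\js_\tl$; apply Lemmas \ref{lemma:p} and \ref{lemma:D_k} per level to get $2\gamma\cdot 2^{\tl+1}$ and $2K\cdot 2^{\tl+1}$; decompose $\tc(\mathcal{S},j)$ and $G(\mathcal{S},j)$ across levels $t\le \tl(j)$ and sum the geometric series to reach $P(\mathcal{S},j)+\sum_k D_k(\mathcal{S},j)\le 8(\gamma+K)\,2^{\tl(j)}$. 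Finally, the bound $2^{\tl(j)}\le 4C^*_j$ is not immediate from the definition of $\tl(j)$, which only gives the reverse inequality $C^*_j\le 2^{\tl(j)}$; it requires the case analysis on which of the two defining conditions fails at $\tl(j)-1$, invoking constraint \eqref{cons:left_index} in the first case. Your claim that ``at least half of $j$'s LP mass sits inside intervals with right endpoint at most $2^{\tl(j)}\le 2C^*_j$'' asserts the needed direction without this argument (and with the wrong constant). So the plan is aimed correctly, but the level-set partition, the feasibility lemma that licenses reusing the makespan lemmas, and the proof that $2^{\tl(j)}=O(C^*_j)$ are all missing, and these constitute essentially all of the new content in the paper's proof.
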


Theorem \ref{theorem:weighted_main} is the first result on total weighted completion time for the setting of related machines with heterogeneous communication time and it matches the bounds in cases where previous results exist. In particular, if the weights are chosen so as to recover makespan, then the bound matches that of Theorem \ref{theorem:main}.  Similarly, results for identical machines can be recovered as done in the case of makespan.   However, note that the group assignment rule used for GETF here is different than that in Theorem \ref{theorem:main}.  The rule used in Theorem \ref{theorem:weighted_main} applies more generally but, while both group assignment rules yield the same worst-case performance bound for makespan, we expect that the rule used in Theorem \ref{theorem:main} will lead to a smaller makespan in most practical settings as it is designed for the purpose of minimizing the makespan.

\section{Proofs} 
In this section, we present our proofs of Theorems \ref{theorem:main} and \ref{theorem:weighted_main}.  The general form of both arguments is similar; however, the case of total weighted completion time is more involved.  The first step of our argument is to show a general upper bound, which is valid for  GETF regardless of choices of group assignment function $\ga{\cdot}$, and tie-breaking rule. This \textit{Separation Principle} can be used to easily establish the result for makespan in the case of identical machines (Proposition \ref{prop:identical}), and represents a significant simplification compared to existing proofs of that result in the literature.  We then tighten the general bound by taking advantage of the choices of $\ga{\cdot}$ described in Section \ref{section:GETF} for makespan and total weighted completion time. Finally, we establish a connection between the makespan and total weighted completion time in the same settings by introducing a time-indexed LP that enables us to bound the total weighted completion time.

\subsection{A Separation Principle}

The Separation Principle presented here is a key component of our proof of Theorem \ref{theorem:main}. The core of nearly all proofs in this area is the construction of a chain, which is then used to bound the overall makespan. This idea goes back to the first list scheduling algorithms proposed by \cite{graham1969bounds}. The key to our  argument is to bound the amount of communication time between any predecessor-successor pairs in a terminal chain.  However, as we discuss in Section \ref{section:GETF}, it is not possible to do this under list scheduling algorithms.%, and so our approach differs from that in \cite{li2017scheduling}.}   

Our approach also differs considerably from the approach used to study ETF in \cite{hwang1989scheduling}, where the authors divide $[0, C_{max}]$ into two sets of time intervals, one for the time when all the machines are busy and the other that one chain covers. Extending this approach to related machines does not appear possible. In contrast, in our argument, the construction of a terminal chain is simple and so we can identify the set of time intervals between tasks in the terminal chain and take advantage of the greedy nature of GETF to bound these times directly.  

A key feature of the the Separation Principle below is that it separates the analysis of the terminal chain from the analysis of the group assignment rule, which provides another valuable simplification of the previous proof approaches.

\begin{theorem}[Separation Principle]
For any choice of group assignment function $\ga{\cdot}$ and tie-breaking rule, GETF produces a schedule $\mathcal{S}$ of makespan
\begin{equation*}
    C_{max}(\mathcal{S}) \leq P + \sum_{k=1}^K D_k + C,
\end{equation*}
where
\begin{equation*}
    P = \sum_{\task{c_j} \in \tc} \frac{\nodeW{{c_j}}}{\ps{\ma{c_j}}},
\end{equation*}
\begin{equation*}
    D_k =  \frac{\sum_{\task{j}: k \in \ga{j} } \nodeW{j}}{s(M_k)},
\end{equation*}
\begin{equation*}
    C = \sum_{j=1}^{N-1}  \frac{\edgeW{c_{j}}{c_{j+1}}}{ \sbar{j}{j+1}}.
\end{equation*}
\end{theorem}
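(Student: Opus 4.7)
The plan is to decompose the makespan along the terminal chain $\tc: \task{c_1} \prec \task{c_2} \prec \cdots \prec \task{c_N}$ into exactly the three claimed pieces $P$, $C$, and $\sum_k D_k$. First I would telescope: since $\task{c_N}$ is a last-finishing task in $\mathcal{S}$,
\begin{equation*}
C_{max}(\mathcal{S}) \;=\; \st{c_N} + \frac{\nodeW{c_N}}{\ps{\ma{c_N}}} \;=\; \st{c_1} + \sum_{j=1}^{N}\frac{\nodeW{c_j}}{\ps{\ma{c_j}}} + \sum_{j=1}^{N-1}\bigl(\st{c_{j+1}} - f_j\bigr),
\end{equation*}
where $f_j = \st{c_j} + \nodeW{c_j}/\ps{\ma{c_j}}$. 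The middle sum is exactly $P$, so the remaining job is to bound the ``launch delay'' $\st{c_1}$ plus the inter-task gaps $\st{c_{j+1}} - f_j$ by $C + \sum_{k=1}^{K} D_k$.

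Next I would split each gap into a communication piece and an idle piece. Since the actual communication delay from $\ma{c_j}$ to $\ma{c_{j+1}}$ is at most $\edgeW{c_j}{c_{j+1}}/\bar{s}(c_j,c_{j+1})$ (the slowest speed from $\ma{c_j}$ to any machine of the group $\ga{c_{j+1}}$), I can write
\begin{equation*}
\st{c_{j+1}} - f_j \;\leq\; \frac{\edgeW{c_j}{c_{j+1}}}{\bar{s}(c_j,c_{j+1})} + I_{j+1},
\end{equation*}
where $I_{j+1} \geq 0$ is the residual idle time after $\task{c_j}$'s payload is guaranteed to have reached every machine of $\ga{c_{j+1}}$. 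Also set $I_1 := \st{c_1}$. The communication pieces then sum to exactly $C$, and the theorem reduces to proving $\sum_{j=1}^{N} I_j \leq \sum_{k=1}^{K} D_k$.

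The crux is the following claim: throughout the idle window of length $I_j$ that precedes the start of $\task{c_j}$, every machine of $\ga{c_j}$ is busy processing some task, which by the definition of $\ga{\cdot}$ must itself be a task assigned to that same group. This is where the ``earliest start'' rule of GETF bites: if any machine in $\ga{c_j}$ were free during this window while $\task{c_j}$ sat ready with data in hand, GETF would have picked $\task{c_j}$ (or a ready competitor with an even earlier start time) instead of leaving the machine idle, contradicting the actual value of $\st{c_j}$. The windows $\{I_j\}$ are pairwise disjoint along the chain and disjoint from the chain's own processing intervals, so collecting those $I_j$ with a common group $M_k$ gives
\begin{equation*}
s(M_k) \sum_{j:\,\ga{c_j}=M_k} I_j \;\leq\; \sum_{j':\,\ga{j'}=M_k} \nodeW{j'},
\end{equation*}
because the left side is a lower bound on the total work performed on $M_k$ during those idle windows, while the right side is the total work ever assigned to $M_k$. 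Dividing by $s(M_k)$ and summing over $k$ yields $\sum_j I_j \leq \sum_k D_k$, completing the argument.

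The main obstacle I anticipate is making the ``busy machines'' claim airtight when $\task{c_{j+1}}$ has predecessors besides $\task{c_j}$ whose payloads could in principle arrive at some machine of $\ga{c_{j+1}}$ later than $\task{c_j}$'s does, in which case the idle window might be caused by outstanding data rather than busy machines. The clean fix is to refine the chain construction so that $\task{c_j}$ is precisely the predecessor whose data arrival (rather than merely whose finish time) pins down $\task{c_{j+1}}$'s earliest start on $\ma{c_{j+1}}$; with this choice all other predecessors' payloads land before the $I_{j+1}$ window opens, and the greedy contradiction goes through verbatim. Everything after that -- the telescoping and the per-group accounting of busy time -- is straightforward bookkeeping.
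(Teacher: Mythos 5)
Your overall route is the paper's route: build the finish-time terminal chain, split the makespan into the chain's processing time $P$, the inter-task gaps, and then split each gap into a communication part (charged to $C$) and a busy part (charged to the group loads $D_k$ via a work-conservation count on the machines of $\ga{c_j}$). Two remarks on where you diverge. First, the paper never needs your simultaneous claim that \emph{every} machine of $\ga{c_j}$ is busy throughout a residual window $I_j$ sitting at the end of the gap. Instead it bounds, for each machine $\machine{i}\in\ga{c_j}$ separately, the \emph{total} idle time $\e{i}$ of that machine anywhere inside the gap by the communication time $\edgeW{c_{j-1}}{c_j}/\cs{\ma{c_{j-1}}}{i}$, and then uses $\sum_{\machine{i}\in M_k}(t_k-\e{i})\ps{i}\le\sum_{\task{j}:\ga{j}=M_k}\nodeW{j}$ to get $t_k\le D_k+\ebar{k}$ with $\ebar{k}=\max_i \e{i}\le \edgeW{c_{j-1}}{c_j}/\sbar{j-1}{j}$. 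This per-machine accounting is quantitatively the same as yours but more robust: it does not require the idle time to be confined to a prefix of the gap, and it automatically handles the fact that $c_{j-1}$'s data reaches different machines of the group at different times. You should adopt it; it also cleanly absorbs your $I_1=\st{c_1}$ term.

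The genuine gap is the one you flag yourself, and your proposed patch does not close it for the statement as written. The theorem's $C$ is defined over the terminal chain in which $\task{c_{j-1}}$ is a predecessor of $\task{c_j}$ that \emph{finishes last}; if $\task{c_j}$ has another predecessor $p$ that finishes earlier but whose data ($\edgeW{p}{c_j}$ large, or slow link) arrives at the machines of $\ga{c_j}$ later than $\task{c_{j-1}}$'s does, then machines of $\ga{c_j}$ can sit idle waiting for $p$'s payload, and that idle time is not bounded by $\edgeW{c_{j-1}}{c_j}/\sbar{j-1}{j}$. Redefining the chain so that $\task{c_{j-1}}$ is the arrival-pinning predecessor changes which edges appear in $C$, i.e., it proves a bound with a different additive term than the one in the theorem; it is also not well defined as a single chain, since the pinning predecessor can differ across the machines $\machine{i}\in\ga{c_j}$ over which GETF minimizes the start time. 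The paper's own proof handles this point only by asserting that ``regardless of the reason for idle time,'' the task could otherwise have started earlier --- so your instinct about where the difficulty lies is exactly right, but to reproduce the stated theorem you must either justify the idle bound for the finish-time chain against non-chain predecessors or accept a modified definition of $C$; the proposal as written does neither.
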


Note that the upper bound in this result is valid regardless of the choice of group assignment rule and tie-breaking rule. $P$ is the sum of processing times along a terminal chain and $D_k$ can be viewed as total load assigned to machines in group $M_k$. Both $P$ and $D_k, k= 1, 2, \ldots, K$, are not dependent on the communication constraint, which enables us to take advantage of any good choice of group assignment rule $\ga{\cdot}$ for general DAG scheduling, even in the case of zero communication time.

\begin{proof}

 Our proof proceeds in four steps:
\begin{enumerate}[label=(\roman*)]
\item Define a terminal chain $\tc$. Recall that a chain $\tc$, $\task{c_1}  \prec \task{c_2} \prec \ldots \prec \task{c_N}$ is a terminal chain when task $\task{c_N}$ completes at the end of the overall schedule.
\item Partition the overall makespan into $K+1$ parts.  The idea of this step is to decouple $[0, C_{max}]$ into one part where the tasks in the terminal chain are being processed and $K$ other parts associated with each machine group. Dependent on the choices of group assignment rule, we can further bound these $K+1$ parts. 
\item Bound the idle time in between tasks. The greedy nature of GETF makes it possible to bound the length of the idle time intervals between tasks by communication delays of task pairs.
\item Combine (ii) and (iii) to bound the overall makespan in terms of the communication time of the terminal chain.
\end{enumerate}

\textit{$(\romannum{1})$ Define a terminal chain $\tc$.} To find a terminal chain of length $N$, we start with one of the tasks that ends last, denoted as $\task{c_N}$. According to the definition of $\ma{\cdot}$ and $\st{\cdot}$, task $\task{c_N}$ is assigned to machine $\ma{c_N}$ in group ${\ga{c_N}}$ with a starting time $\st{c_N}$. Among all the immediate predecessors of task $\task{c_N}$, we pick one of the tasks that finishes last and define it as $\task{c_{N-1}}$. In such a fashion, we construct a chain $\tc$ of tasks $\task{c_1} \prec \task{c_2} \prec \ldots \prec \task{c_N}$ of length $N$ such that $\task{c_1}$ does not have any predecessor.

\textit{$(\romannum{2})$ Partition $[0, C_{max}]$ into $K+1$ parts, $\mathcal{T}_0, \mathcal{T}_1, \ldots, \mathcal{T}_K$.} Recall that $K=\bigO{(\loglogm)}$ is the number of groups for machines by the group assignment rule as we describe in the previous section. Let $\mathcal{T}_0$ denote the union of the time intervals during which tasks of chain $\tc$ are being processed. Consider the time interval between the end of task $\task{c_{j-1}}$ and the start of task $\task{c_{j}}$ for $j = 2, 3, \ldots, N$, and assign it to $\mathcal{T}_{k}$ where $M_k = \ga{c_{j}}$. As a set of time intervals, $\mathcal{T}_k$ can be possibly empty or have more than one time interval. Essentially, $\mathcal{T}_k$ is a set of time intervals that tasks in the terminal chain $\tc$ assigned to machines in group $M_k$ have to wait before being processed. In such a fashion, we define $\mathcal{T}_1, \mathcal{T}_2, \ldots, \mathcal{T}_K$ since $\ga{\cdot}$ maps each task to one of the $K$ machine groups. The length of the union of $\mathcal{T}_i$ for $i=0,1,\ldots,K$ is the makespan.

\textit{$(\romannum{3})$ Bound the idle time in between tasks.} Consider a task $\task{c_j}$  assigned to machine $\ma{c_j}$. For each machine $\machine{i} \in {\ga{{c_j}}}$, let $E(c_{j-1}, c_j, i)$ denote a union of disjoint empty time intervals on machine $\machine{i}$ between the end time of task $\task{c_{j-1}}$ and the start time of task $\task{c_j}$. Between the end time of task $\task{c_{j-1}}$ and the start time of task $\task{c_j}$, there can be multiple tasks being processed on machine $\machine{i}$ in serial, possibly resulting in  more than one idle time interval on machine $\machine{i}$ during that time interval $E(c_{j-1}, c_j, i)$. Precedence constraints between task pairs can also possibly make a successor wait before it gets started. Regardless of the reason for idle time between tasks, each task can not possibly start earlier on any machine in the assigned group due to the greedy feature of GETF. Thus the length of $E(c_{j-1}, c_j, i)$ is bounded above by the communication time between task $\task{c_{j-1}}$ and task $\task{c_j}$, i.e.,
\begin{equation*}
    |E(c_{j-1},c_j, i)| \leq \frac{\edgeW{c_{j-1}}{c_{j}}}{\cs{\ma{c_{j-1}}}{i}} \quad \forall \machine{i} \in {\ga{c_j}}.
\end{equation*}
This is true because if it were not the case then task $\task{c_j}$ could have started earlier on machine $\machine{i}$. Note that the end time of task $\task{c_j}$ could possibly be earlier if it were allowed to be scheduled on a faster machine with a slightly bigger communication delay, since the processing speeds of machines in the same group vary.

Let $\e{i}$ be idle time on machine $\machine{i}$ in group $M_k$ during the time interval $\mathcal{T}_k$, and let $\ebar{k}$ be maximum idle time on any machine in group $M_k$ during the time intervals $\mathcal{T}_k$, i.e., $\e{i} \leq \ebar{k}$ for all $\machine{i} \in M_k$. Thus,

\begin{alignat}{2}
\sum_{k=1}^{K}  \ebar{k} & \leq  \sum_{j=2}^{N}  \frac{\edgeW{c_{j-1}}{c_{j}}}{\min_{\machine{i'} \in {\ga{c_j}}} \cs{\ma{c_{j-1}}}{i'}} \notag\\
& \leq \sum_{j=2}^{N}  \frac{\edgeW{c_{j-1}}{c_{j}}}{ \sbar{j-1}{j}}. \label{eq:ebar}
\end{alignat}

\textit{$(\romannum{4})$ Bound the makespan.} For $1 \leq k \leq K$, the total speed of machines in group $M_k$ is $$s(M_k) = \sum_{\machine{i} \in M_k} \ps{i}.$$
 Denote the total length of the intervals in $\mathcal{T}_k$ by $t_k$. There must be at least a sum of $\left(t_k - \e{i}\right) \ps{i}$ units of processing done on each machine $\machine{i}$ in group $M_k$ during the time intervals $\mathcal{T}_{k}$. Thus for $1 \leq k \leq K$,
\begin{equation*}
    \sum_{\machine{i} \in M_k} \left(t_k - \e{i} \right) \ps{i} \leq \sum_{\task{j}: \ga{j} = M_k} \nodeW{j}.
\end{equation*}
Therefore,
\begin{equation}
     t_k \leq \frac{\sum_{\task{j}: \ga{j} = M_k} \nodeW{j}}{s({M_k})} + \frac{\sum_{\machine{i} \in M_k} \e{i} \ps{i}}{s(M_k)}. \label{eq:tea}
\end{equation}
We now bound $\mathcal{C}_{max}$: 
\begin{subequations}
\begin{alignat}{3}
\mathcal{C}_{max}   =& \quad \sum_{k=1}^K t_k + t_0\notag \\
 \leq &\quad \sum_{k=1}^K \left( \frac{\sum_{\task{j}: \ga{j} = k} \nodeW{j}}{s({M_k})} + \frac{\sum_{\machine{i} \in M_k} \e{i} \ps{i}}{s(M_k)} \right) + \notag \\
& \quad \sum_{\task{c_j} \in \tc} \frac{ \nodeW{c_j}}{\ps{\ma{c_j}}} \label{eq:chai}\\
 \leq& \quad P + \sum_{k=1}^{K} D_k + \sum_{k=1}^K \ebar{k} \frac{\sum_{\machine{i} \in M_k} \ps{i}}{s(M_k)} \notag\\
 = &\quad P + \sum_{k=1}^{K} D_k + \sum_{k=1}^K \ebar{k}  \notag\\
 \leq &\quad P + \sum_{k=1}^{K} D_k + C \label{eq:yeah},
\end{alignat}
\end{subequations}
where \eqref{eq:chai} is due to \eqref{eq:tea} and \eqref{eq:yeah} is due to \eqref{eq:ebar}.
% \Halmos
\end{proof}

\subsection{Proof of Theorem \ref{theorem:main}} \label{section:main_proof}

In order to apply the Separation Principle to prove Theorem \ref{theorem:main}, we need to prove bounds on $P$ and $\sum_{k=1}^K D_k$ in the case of the group assignment rule defined in Section \ref{section:GETF}.  For this, we consider the scheduling problem with zero communication time. Note that the design of group assignment function $\gam{\cdot}$ is based on the optimal solution $x^*$ of the relaxed LP for a scheduling problem with zero communication time, hence the upper bounds for both $P$ and $\sum_{k=1}^K D_k$ are associated with the optimal objective of the relaxed LP in the setting with zero communication time as well.

The bounds of $P$ and $\sum_{k=1}^K D_k$ are given in the following two lemmas, which are adapted from results in \cite{li2017scheduling}. Theorem \ref{theorem:main} follows directly from these two lemmas, the Separation Principle, and the fact that  $T^* \leq \opti$, where $T^*$ is the optimal solution to the LP.

\begin{lemma}\label{lemma:p}
$P \leq 2 \gamma T^*.$ 
\end{lemma}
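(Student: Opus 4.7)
The plan is to combine two ingredients: (i) an LP inequality summed along the terminal chain, and (ii) a per-task comparison between the assigned machine's inverse speed and the LP's fractional processing time. Writing $\sigma_j := \sum_i x^*_{i,j}/\ps{i}$ for the LP fractional processing time of task $j$, I would first chain the precedence constraint \eqref{cons:prec} along $\tc : c_1 \prec c_2 \prec \cdots \prec c_N$ (with \eqref{cons:cap} as the base case at $c_1$, and \eqref{cons:completion_time} at the top) to obtain $\sum_{c_j \in \tc} \nodeW{c_j}\, \sigma_{c_j} \le C^*_{c_N} \le T^*$. Then I would establish the per-task estimate $1/\ps{\ma{c_j}} \le 2\gamma\, \sigma_{c_j}$ for each $c_j$ in the chain. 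Given these two pieces, the lemma follows by substitution:
\[
P = \sum_{c_j \in \tc} \frac{\nodeW{c_j}}{\ps{\ma{c_j}}} \;\le\; 2\gamma \sum_{c_j \in \tc} \nodeW{c_j}\, \sigma_{c_j} \;\le\; 2\gamma\, T^*.
\]

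The main step, and the main obstacle, is the per-task estimate. The argument rests on interpreting $\ell_j$ as a median group index. First I would lower-bound $\sigma_j$: by the maximality of $\ell_j$, $\sum_{k \ge \ell_j+1} x^*_{M_k,j} < \tfrac12$, so by \eqref{cons:assigned} strictly more than half of the LP mass of task $j$ lies on groups $M_1,\ldots,M_{\ell_j}$, whose machines all have speed at most $\gamma^{\ell_j}$. This yields
\[
\sigma_j \;\ge\; \sum_{k \le \ell_j} \frac{x^*_{M_k,j}}{\gamma^{\ell_j}} \;>\; \frac{1}{2\gamma^{\ell_j}}.
\]
Dually, the group assignment rule guarantees $\ma{j} \in M_{k^*_j}$ for some $k^*_j \ge \ell_j$, so $\ps{\ma{j}} \ge \gamma^{k^*_j - 1} \ge \gamma^{\ell_j - 1}$. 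Combining the two, $1/\ps{\ma{j}} \le \gamma/\gamma^{\ell_j} < 2\gamma\, \sigma_j$, which is the desired per-task estimate.

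The conceptual insight driving the proof is the symmetric use of $\ell_j$: its maximality definition simultaneously forces more than half the LP mass onto the slow tail $M_1,\ldots,M_{\ell_j}$ (which lower-bounds $\sigma_j$) and forces the assigned group to lie in the fast tail $M_{\ell_j},\ldots,M_K$ (which lower-bounds $\ps{\ma{j}}$). The fastest speed in the slow tail and the slowest speed in the fast tail differ by a factor of $\gamma$, and this gap is precisely the source of the $2\gamma$ in the bound. Notably, only the weak property $k^*_j \ge \ell_j$ of the assignment rule is used here; the max-total-speed tiebreaker in $\gam{\cdot}$ plays no role in bounding $P$ and will instead enter in the companion bound on $\sum_k D_k$.
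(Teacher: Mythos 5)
Your proof is correct and follows essentially the same route as the paper's: the identical per-task estimate $1/\ps{\ma{j}} \le 2\gamma \sum_i x^*_{i,j}/\ps{i}$ derived from the median property of $\ell_j$ and the speed gap between groups, followed by summing along the terminal chain against the LP. If anything, your telescoping of constraint \eqref{cons:prec} to get $\sum_{c_j\in\tc}\nodeW{c_j}\sigma_{c_j}\le C^*_{c_N}\le T^*$ is stated more carefully than the paper's displayed intermediate step $\le 2\gamma\sum_{c_j\in\tc}C^*_{c_j}\le 2\gamma T^*$, which only makes sense when read as the same telescoping argument.
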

\begin{proof}
Recall that $x_{M', j}^* = \sum_{i \in M'} x^*_{i, j}$ and  $\ell_j$ as the largest group index such that at least more than half of tasks are assigned to machines in groups $M_\ell, \ldots, M_K$. For every task $\task{j}$ and any machine $\machine{i} \in {\ga{j}}$, by definition of the largest index $\ell_j$,
\begin{equation} \label{eq:ss}
    \sum_{k=1} ^{\ell_j} x^*_{M_k, j} > \frac{1}{2}.
\end{equation}
Thus,
\begin{subequations} \label{eq:lan}
\begin{alignat}{2}
\sum_{\machine{i'} \in M} \frac{x^*_{i', j}}{\ps{i'}} & = \sum_{k=1}^{K}  \sum_{\machine{i'} \in  M_k} \frac{x^*_{i', j}}{\ps{i'}} \\
& \geq \sum_{k=1}^{\ell_j}  \sum_{\machine{i'} \in  M_k} \frac{x^*_{i', j}}{\ps{i'}} \notag \\
& \geq \frac{1}{2} \gamma^{-\ell_j} \label{eq:s2}\\
& \geq \frac{1}{2\gamma \ps{i}} \label{eq:s3},
\end{alignat}
\end{subequations}
where \eqref{eq:s2} is due to \eqref{eq:ss} and the fact that processing speed of machine $i'$ in group $M_k$ for task $\task{j}$ is at most $\gamma^{\ell_j}$ for $k \leq \ell_j$, and \eqref{eq:s3} is due to the fact that processing speed of machine $i$ in group ${\ga{j}}$, whose group index is not smaller than $\ell_j$, is at least $\gamma^{\ell_j-1}$. Using this, we can bound $P$ as follows:
\begin{subequations}
\begin{alignat}{2}
    P & = \sum_{\task{c_j} \in \tc} \frac{\nodeW{c_j}}{\ps{\ma{c_j}}} \notag \\
    & \leq 2 \gamma \sum_{\task{c_j} \in \tc} \nodeW{c_j} \sum_{{i'} \in M} \frac{x^*_{i', c_j}}{\ps{i'}} \label{eq:f1}\\
    & \leq 2 \gamma \sum_{\task{c_j} \in \tc} C^*_{c_j} \label{eq:f2}\\
    & \leq 2 \gamma T^* \label{eq:f3},
\end{alignat}
\end{subequations}
where \eqref{eq:f1} is due to \eqref{eq:lan}, \eqref{eq:f2} is due to constraint \eqref{cons:cap_each_machine} of the LP and \eqref{eq:f3} is due to constraint \eqref{cons:prec} of the LP.
% \Halmos
\end{proof}

\begin{lemma}\label{lemma:D_k}
$\sum_{k=1}^K D_k \leq 2KT^*.$
\end{lemma}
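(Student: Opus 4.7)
The plan is to rewrite $\sum_{k=1}^K D_k$ as $\sum_j \nodeW{j}/s(\gam{j})$, since $\gam{\cdot}$ sends each task to a single group and therefore $\{j : \gam{j} = M_k\}$ partitions the task set. I then intend to charge each summand $\nodeW{j}/s(\gam{j})$ against the fractional LP optimum $x^*$ by exploiting the two defining properties of $\gam{j}$: the ``majority'' property $\sum_{k=\ell_j}^{K} x^*_{M_k, j} \geq 1/2$ built into the definition of $\ell_j$, and the ``max-speed'' property $s(\gam{j}) = \max_{\ell_j \leq k \leq K} s(M_k)$, which in particular gives $1/s(\gam{j}) \leq 1/s(M_k)$ for every $k \in [\ell_j, K]$.

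Concretely, the majority property yields $\nodeW{j} \leq 2\sum_{k=\ell_j}^{K} \nodeW{j}\, x^*_{M_k, j}$ for each task $\task{j}$; dividing by $s(\gam{j})$ and bounding $1/s(\gam{j})$ by $1/s(M_k)$ inside the sum (valid because $k \geq \ell_j$) gives
\[
\frac{\nodeW{j}}{s(\gam{j})} \;\leq\; 2\sum_{k=\ell_j}^{K} \frac{\nodeW{j}\, x^*_{M_k, j}}{s(M_k)}.
\]
Summing this inequality over all $j$ and swapping the order of summation, while enlarging the inner range from $k \geq \ell_j$ to $1 \leq k \leq K$, I get
\[
\sum_{k=1}^K D_k \;\leq\; 2 \sum_{k=1}^{K} \frac{1}{s(M_k)} \sum_j \nodeW{j}\, x^*_{M_k, j}.
\]

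The final step is to invoke the LP: summing constraint \eqref{cons:cap_each_machine} over all machines $\machine{i} \in M_k$ yields $\sum_j \nodeW{j}\, x^*_{M_k, j} \leq s(M_k)\, T^*$, and substituting collapses the outer sum to $2K T^*$, as claimed. The only delicate step is the coupling of the two properties of $\gam{\cdot}$ in the first display: the majority property alone only relates $\nodeW{j}$ to the LP load on groups $M_{\ell_j}, \ldots, M_K$, and the max-speed property alone only lets one replace $s(\gam{j})$ by $s(M_k)$ on those same groups, but neither suffices on its own. Together they translate the integral assignment cost into a fractional cost already paid for by the LP, after which the per-machine load bound finishes the job. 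This is essentially the charging scheme used in \cite{li2017scheduling} for the setting without communication delays, applied here unchanged since $\gam{\cdot}$ does not depend on the communication structure.
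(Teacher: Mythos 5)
Your proof is correct and follows essentially the same route as the paper's: rewrite $\sum_{k=1}^K D_k$ as $\sum_{j} \nodeW{j}/s(\gam{j})$, combine the majority property of $\ell_j$ with the max-speed property of $\gam{j}$ to bound each term by $2\sum_{k}\nodeW{j}x^*_{M_k,j}/s(M_k)$, and finish by summing constraint \eqref{cons:cap_each_machine} over the machines in each group. No gaps.
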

\begin{proof}
For any task $\task{j}$, by definition of $\ell_j$, $\sum_{k=\ell_j}^{K} x^*_{M_k, j} \geq \frac{1}{2}$. Thus,
\begin{alignat}{2}
    \frac{1}{2 s({\ga{j}})} & \leq \sum_{k=\ell_j}^K \frac{x^*_{M_k, j}}{s({\ga{j}})} \notag \\
    & \leq \sum_{k=\ell_j}^K \frac{x^*_{M_k, j}}{s(M_k)} \label{eq:shan} \\
    & \leq \sum_{k=1}^K \frac{x^*_{M_k, j}}{s(M_k)} \notag.
\end{alignat}
Inequality \eqref{eq:shan} is due to the fact that the assigned group $f(j)$ maximizes the total speeds of machines in that group among the candidates $M_{\ell_j}, \ldots, M_K$. Thus,
\begin{alignat}{2}
\sum_{k=1}^K D_k & = \sum_{k=1}^K \frac{\sum_{\task{j}: \ga{j} = M_k} \nodeW{j}}{s(M_k)}  = \sum_{\task{j} \in V} \frac{\nodeW{j}}{s({\ga{j}})} \notag \\
& \leq 2 \sum_{\task{j} \in V} \nodeW{j} \sum_{k=1}^K \frac{x^*_{M_k, j}}{s(M_k)} \notag\\
& = 2 \sum_{k=1}^K \frac{1}{s(M_k)} \sum_{\task{j} \in V} \nodeW{j} x^*_{M_k, j} \notag \\
& \leq 2 \sum_{k=1}^K T^*  \label{eq:last} \\
& = 2KT^*. \notag
\end{alignat}
The total load assigned to machines in group $M_k$ is $\sum_{\task{j} \in V} \nodeW{j} x^*_{M_k, j}$ while its total speed is $S(M_k)$. Summing over machines in group $M_k$ on both sides for constraint \eqref{cons:cap_each_machine} leads to  \eqref{eq:last}.
% \Halmos
\end{proof}

\subsection{Proof of Proposition \ref{prop:identical}}

We now show how the Separation Principle can be used to provide a new, simpler proof of the state-of-the-art approximation ratio of ETF in the case of identical machines.  Recall that the group assignment function is not required for GETF in this case. 

To prove Proposition \ref{prop:identical}, we use the same approach as we used for proving the Separation Principle. However, we can tighten the analysis in the final step of the argument.  Specifically, the proof can be broken into three steps, instead of four:
\begin{enumerate}[label=(\roman*)]
\item Define a terminal chain $\tc$. This step is identical to the definition of a terminal chain in the proof of the Separation Principle.
\item Bound the idle time in between tasks. As the machines are identical in terms of processing speed, communication speed between different machine pairs are still heterogeneous due to the possible geolocations of machines.
\item Combine (i) and (ii) to bound the overall makespan in terms of the communication time of the terminal chain.
\end{enumerate}
Compared with the proof of the Separation Principle, Step (i) defines a terminal chain in the exactly same way. In Step (ii), bounding the idle time in the case of identical machines is also similar. Step (iii) requires more work.  Here, we further tighten the bound by eliminating the processing time of the terminal chain to improve the constant factor.

\textit{$(\romannum{1})$ Define a terminal chain $\tc$.} This step is identical to the definition of a terminal chain in the proof of the Separation Principle.

\textit{$(\romannum{2})$ Bound the idle time in between tasks.} Let $I(c_{j-1}, c_j)$ be the time interval between the end time of task $\task{c_{j-1}}$ and the start time of $\task{c_i}$ for $j=2, 3, \ldots,N$. As we explained in the Separation Principle, there can possibly be multiple idle time intervals on a machine during the time interval $I(c_{j-1}, c_j)$. For each machine $\machine{i} \in M$, define $E(c_{j-1}, c_j, i)$ as a union of disjoint empty time intervals on machine $\machine{i}$ during the time interval ${I}(c_{j-1}, c_j)$. For any machine $\machine{i}$, the length of $E(c_{j-1}, c_j, i)$ is bounded above by the communication time between task $\task{c_{j-1}}$ and task $\task{c_j}$, i.e.,
\begin{equation*}
    |E(c_{j-1},c_j, i)| \leq \frac{\edgeW{c_{j-1}}{c_{j}}}{\cs{\ma{c_{j-1}}}{i}} \quad \forall \machine{i} \in M, j = 2, 3, \ldots, N.
\end{equation*}
Otherwise task $\task{c_j}$ could have started earlier on machine $\machine{i}$.

\textit{$(\romannum{3})$ Bound the makespan.} During the time intervals ${I}(c_{j-1}, c_j)$ for $j=2,3, \ldots, N$, there must be at least $\sum_{j=2}^{N} \sum_{i=1}^m (|{I}(c_{j-1}, j_i)| - |{E}(c_{j-1}, c_j,i)|)$ processing units done, and it is bounded by a sum of the processing units for all the tasks except those in the terminal chain. This leads to the following bound:
\begin{subequations}
\begin{alignat}{2}
    \sum_{j=2}^{N} \sum_{i=1}^m \left(|{I}(c_{j-1}, c_j)| - |{E}(c_{j-1}, c_j,i)| \right) &\leq \sum_{j=1}^{n} \nodeW{j} - \sum_{j=1}^{N} \nodeW{c_j}. \label{eq:tight}
\end{alignat}
\end{subequations}
Finally, applying \eqref{eq:tight}, we have
\begin{subequations}
\begin{alignat}{4}
\mathcal{C}_{max} &=&& \sum_{j=2}^{N} |{I}(c_{j-1}, c_j)| + \sum_{j=1}^N \nodeW{c_j} \notag \\
& \leq && \frac{1}{m} \sum_{j=1}^n \nodeW{j} + \frac{m-1}{m} \sum_{j=1}^N \nodeW{c_j} + \notag \\
& && \frac{1}{m} \sum_{j=2}^{N} \sum_{i=1}^m |{E}(c_{j-1}, c_j,i)| \notag \\
& \leq && \left( 2- \frac{1}{m} \right)  \text{opt}^{(i)} + C' \label{eq:aiqing}.
\end{alignat}
\end{subequations}
The total processing time $\sum_{j=1}^n \nodeW{j}$ divided by the number of machines $m$ is the smallest possible makespan, i.e., $\frac{1}{m} \sum_{j=1}^n \nodeW{j} \leq \opti$. At the same time, the makespan of any schedule should at least cover the processing time of any chain $\tc$ in the DAG. These two facts lead to the last inequality \eqref{eq:aiqing}.

\subsection{Proof of Theorem \ref{theorem:weighted_main}}
%\shai{This section is grammatically much worse than the others. Maybe Yu Su can go over it then Adam and then let me know and I will go over it again?}
To establish the bound on the total weighted completion time for the group assignment rule $\gaw{\cdot}$, we first apply the Separation Principle to separate the requirements on communication and processing times. Second, we break the tasks into subsets based on the task completion times and, for each subset, we form an LP for those tasks alone. For each such LP, we construct a feasible solution $\tilde{x}, \tilde{C}$ and $\tilde{T}$ to bound processing time of the tasks. The feasibility of $\tilde{x}, \tilde{C}$ and $\tilde{T}$ enables us to take advantage of Lemmas~\ref{lemma:p} and~\ref{lemma:D_k} with only a loss of an additional constant factor. 

Given a schedule $\mathcal{S}$ for a DAG $G$, we use the same notation as in Section \ref{section:weighted}, $G(\mathcal{S}, j)$, to denote subsets of DAG. For each DAG $G(\mathcal{S}, j)$, there is a terminal chain $\tc(\mathcal{S}, j)$ with task $j$ as the ending task in the schedule $\mathcal{S}(j)$. Similarly, define $P(\mathcal{S}, j)$ as a sum of the processing time along the terminal chain $\tc(\mathcal{S}, j)$,
\begin{equation}
    P(\mathcal{S}, j) = \sum_{\task{c_j} \in \tc(\mathcal{S}, j)} \frac{\nodeW{{c_j}}}{\ps{\ma{c_j}}},
\end{equation}
and let $D_k (\mathcal{S}, j)$ denote the total load assigned to machines in group $M_k$ in DAG $G(\mathcal{S}, j)$,
\begin{equation}
    D_k (\mathcal{S}, j) =  \frac{\sum_{\task{j}: j \in G(\mathcal{S}, j), k \in \ga{j} } \nodeW{j}}{s(M_k)}.
\end{equation}

For every DAG $G(\mathcal{S}, j)$ associated with schedule $\mathcal{S}_j$ for $1 \leq j \leq n$, we are able to apply Separation Principle and then combine these inequalities as follows:
\begin{subequations}
\begin{alignat}{2}
\sum_j \weight{j} C_j & \leq \sum_j \weight{j} \left(P(\mathcal{S}, j) + \sum_k D_k (\mathcal{S}, j)\right) + \sum_j \weight{j} C(\mathcal{S}, j). \notag
\end{alignat}
\end{subequations}
Both $P(\mathcal{S}, j)$ and $D_k (\mathcal{S}, j)$ are independent of the communication constraints, which enables us to take advantage of any group assignment rule.

Using the group assignment rule $\gaw{\cdot}$ helps further tighten the bound.  To show this, we first divide the $n$ tasks into $\tlu$ sets based on $\tl(j)$, which can be viewed as a rough estimate of the completion time of task $\task{j}$. For the $\tl$th interval, we define $\js_\tl$ as a set of tasks such that $\tl(j) = \tl$:
\begin{equation*}
    \js_\tl = \{ j: \tl(j) = \tl \}.
\end{equation*}
In this way, we have divided the $n$ tasks into $\tlu$ sets: $\js_1, \js_2, \ldots, \js_\tlu$. 

Next, for $1 \leq \tl \leq \tlu$, we construct a set of feasible solutions for LP \eqref{opt:milp}, $\tilde{x}, \tilde{C}$ and $\tilde{T}$, for every set of tasks in $\js_\tl$, based on the optimal solution of LP \eqref{weighted:milp}, i.e., $x^*$ and $C^*$. Note that $\tilde{x}$ here is the same as in equation \eqref{eq:add_chill}. Since precedence constraints are preserved in constraints of the LPs, we can concatenate these schedules together to obtain a feasible schedule for all of the tasks.

\begin{lemma} \label{lemma:feasibility}
Consider a set of tasks $\js_\tl$ for a fixed $\tl$.  A feasible solution for LP \eqref{opt:milp} is defined by
\begin{subequations}
\begin{alignat}{3}
\tilde{x}_{i,j} & = \sum_{t=1}^{\tl} \frac{x^*_{i,j,t}}{\alpha_j} & \quad \forall i, j \in \js_\tl \\
\tilde{C}_j & = 2 C^*_j & \forall j \in \js_\tl \\
\tilde{T} & = 2^{\tl+1}. &
\end{alignat}
\end{subequations}
\end{lemma}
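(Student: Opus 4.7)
The plan is to verify each constraint of the LP relaxation of~\eqref{opt:milp} in turn, restricted to the task set $\js_\tl$, for the proposed triple $(\tilde x,\tilde C,\tilde T)$. Two numerical facts drive everything. First, by definition of $\tl(j)$ as the \emph{minimum} index satisfying its two defining inequalities, $\alpha_j=\sum_{t=1}^{\tl(j)}\sum_i x^*_{i,j,t}\ge 1/2$, so $1/\alpha_j\le 2$. Second, for every $j\in\js_\tl$ we have $\tl(j)=\tl$ and hence $C^*_j\le 2^\tl$. The assignment constraint~\eqref{cons:assigned} is immediate from the renormalization by $\alpha_j$, non-negativity of $\tilde x$ is clear, and the completion-time constraint~\eqref{cons:completion_time} reduces to $2C^*_j\le 2^{\tl+1}$, which is exactly the second fact.

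For the processing-time and precedence constraints, the key inequality I would establish is
\[
\nodeW{j}\sum_i \frac{\tilde x_{i,j}}{\ps{i}}\;\le\;2\,\nodeW{j}\sum_i\frac{1}{\ps{i}}\sum_{t} x^*_{i,j,t},
\]
which follows from $1/\alpha_j\le 2$ together with the fact that $\tilde x$ drops the terms with $t>\tl$. Feeding this into~\eqref{cons:weighted_cap} of~\eqref{weighted:milp} gives $\nodeW{j}\sum_i\tilde x_{i,j}/\ps{i}\le 2C^*_j=\tilde C_j$, and for any pair $j'\prec j$ with both tasks in $\js_\tl$, feeding it into~\eqref{cons:weighted_prec} gives $\tilde C_{j'}+\nodeW{j}\sum_i \tilde x_{i,j}/\ps{i}\le 2C^*_{j'}+2(C^*_j-C^*_{j'})=2C^*_j=\tilde C_j$. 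For the per-machine load constraint~\eqref{cons:cap_each_machine}, using $\tl(j)=\tl$ for every $j\in\js_\tl$ together with $1/\alpha_j\le 2$,
\[
\sum_{j\in\js_\tl}\nodeW{j}\frac{\tilde x_{i,j}}{\ps{i}}\;\le\;2\sum_{j}\frac{\nodeW{j}}{\ps{i}}\sum_{t=1}^\tl x^*_{i,j,t}\;\le\;2\tau_\tl=2^{\tl+1}=\tilde T,
\]
where the last step is constraint~\eqref{cons:index_cap} of~\eqref{weighted:milp}.

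I do not anticipate a real obstacle: the lemma is a bookkeeping exercise in which the factor $2$ in $\tilde C_j$ and $\tilde T$ is engineered precisely to absorb the factor $1/\alpha_j\le 2$ introduced by the renormalization. The one step that requires a moment of care is the precedence constraint, where doubling $C^*$ on \emph{both} sides of~\eqref{cons:weighted_prec} must correctly offset a blow-up that only appears on the load side; the calculation above shows this balances cleanly.
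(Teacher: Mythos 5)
Your proposal is correct and follows essentially the same route as the paper's proof: verify each constraint of LP \eqref{opt:milp} by direct substitution, using $\alpha_j \ge 1/2$ to absorb the renormalization into the factor $2$ in $\tilde{C}_j$ and $\tilde{T}$, and invoking constraints \eqref{cons:weighted_cap}, \eqref{cons:weighted_prec}, \eqref{cons:index_cap}, and the definition of $\tl(j)$ exactly as the paper does. The only difference is that you spell out the arithmetic the paper leaves implicit, which is fine.
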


\begin{proof}
To show feasibility of such a candidate solution, we verify that $\tl$, $\tilde{x}, \tilde{C} \text{ and } \tilde{T}$ satisfy all the constraints in LP \eqref{opt:milp}. Substitute $\tilde{x}$ into the left side of constraint \eqref{cons:assigned} for any task $\task{j} \in \js_\tl$, and it is clear that $\sum_i \tilde{x}_{i,j} = 1$. To validate that constraint \eqref{cons:cap} is satisfied, note that $\alpha_j \geq 1/2$ by definition and so a direct substitution on the left hand side yields the right hand side due to \eqref{cons:weighted_cap}. Similarly, constraint \eqref{cons:weighted_prec} ensures that constraint \eqref{cons:prec} is satisfied and constraint \eqref{cons:index_cap} ensures that constraint \eqref{cons:cap_each_machine} is satisfied. Finally, we obtain $C^*_j \leq 2^\tl$ by definition of $\tl(j)$ and thus constraint \eqref{cons:completion_time} holds.
% \Halmos
\end{proof}

Due to the similarity between group assignment rule $\gam{\cdot}$ and $\gaw{\cdot}$, we can further tighten the bound using Lemmas \ref{lemma:p} and \ref{lemma:D_k} from Section \ref{section:main_proof} directly. Combining Lemmas \ref{lemma:p} and \ref{lemma:feasibility}, we conclude that the total load along any chain $\tc$ in the DAG formed by $\js_\tl$ is upper bounded by 
\begin{alignat}{2}
    \sum_{\task{j} \in \tc} \frac{\nodeW{{j}}}{\ps{\ma{j}}} & \leq 2 \gamma \tilde{T} \notag \\
    & = 2 \gamma \cdot 2^{\tl+1}. \notag
\end{alignat} 
Next, since the terminal chain $\tc(\mathcal{S}, j)$ can be represented as a concatenation of chains in the DAGs formed by tasks in $\js_\tl$ for $1 \leq \tl \leq \tl(j)$, we have
\begin{alignat}{2}
    P(\mathcal{S}, j) & \leq \sum_{t=1}^{\tl(j)} 2 \gamma \cdot 2^{t+1} \notag \\
    & \leq 8\gamma \cdot 2^{\tl(j)}. \notag
\end{alignat}
Using Lemmas \ref{lemma:D_k} and \ref{lemma:feasibility} together gives the following inequality:
\begin{alignat}{2}
    \sum_{\task{j} \in \js_\tl} \frac{\nodeW{j}}{s({\gaw{j}})} & \leq 2K \tilde{T} \notag \\
    & = 2K\cdot 2^{\tl+1}. \notag
\end{alignat}
The left side can be viewed as $\sum_k D_k$ for a DAG formed by tasks in $\js_\tl$. Since the tasks in DAG $G(\mathcal{S},j)$ form a subset of $\cup_{t=1}^{\tl(j)} \js_\tl$, the following inequality holds:
\begin{subequations}
\begin{alignat}{2}
    \sum_k D_k (\mathcal{S}, j) & \leq \sum_{t=1}^{\tl(j)} \sum_{\task{j^\prime} \in \js_\tl} \frac{\nodeW{j^\prime}}{s({\gaw{j^\prime}})} \notag \\
    & \leq \sum_{t=1}^{\tl(j)} 2K \cdot 2^{t+1} \notag \\
    & \leq 8K \cdot 2^{\tl(j)}, \notag
\end{alignat}
\end{subequations}
which immediately yields
\begin{subequations}
\begin{alignat}{2}
P(\mathcal{S}, j) + \sum_k D_k (\mathcal{S}, j) & \leq 8 (\gamma + K) \cdot 2^{\tl(j)} \notag.
\end{alignat}
\end{subequations}

Finally, the remaining piece of the proof is to upper bound $2^{\tl(j)}$ with a multiplicative factor of its optimal completion time $C^*_j$ in the LP \eqref{weighted:milp}. By definition of $\tl(j)$, for task $\task{j}$ either 
\begin{equation} \label{ineq:1}
    \sum_{t=1}^{\tl(j)-1} \sum_i x^*_{i,j,t} < \frac{1}{2}
\end{equation}
or 
\begin{equation} \label{ineq:2}
    C^*_j > 2^{\tl(j)-1}.
\end{equation}
If inequality \eqref{ineq:1} holds, then
\begin{subequations}
\begin{alignat}{2}
2^{\tl(j)-1} & = \tau_{\tl(j)-1} \notag \\
& \leq 2 \tau_{\tl(j)-1} \left(\sum_{t=\tl(j)}^{\tlu} \sum_i x^*_{i,j,t}\right) \label{ineq:qj}\\
& \leq 2  \left(\sum_{t=\tl(j)}^{\tlu} \tau_{t-1} \sum_i x^*_{i,j,t}\right) \notag \\
& \leq 2  \left(\sum_{t} \tau_{t-1} \sum_i x^*_{i,j,t}\right) \notag\\
& \leq 2 C^*_j \label{ineq:left}.
\end{alignat}
\end{subequations}
Inequality \eqref{ineq:qj} is due to \eqref{ineq:1} and the definition of $\tl(j)$, and constraint \eqref{cons:left_index} in the LP \eqref{weighted:milp} leads to \eqref{ineq:left}.
If inequality \eqref{ineq:2} is true, then
\begin{equation*}
  2^{\tl(j)-1} < C^*_j \leq 2 C^*_j.
\end{equation*}
In both cases, $2^{\tl(j)-1}$ is upper bounded by $2 C^*_j$. Thus, we achieve
\begin{equation*}
    P(\mathcal{S}, j) + \sum_k D_k (\mathcal{S}, j) \leq 32 (\gamma + K) \cdot C^*_j.
\end{equation*}
Since $\sum_j \weight{j} C^*_j$ is lower bounded by $\wopti$, we conclude that
\begin{subequations}
\begin{alignat}{2}
\sum_j \weight{j} C_j & \leq \sum_j \weight{j} \left(P(\mathcal{S}, j) + \sum_k D_k (\mathcal{S}, j)\right) + \sum_j \weight{j} C(\mathcal{S}, j) \\
& \leq 32(\gamma+K)\sum_j \weight{j} C^*_j + \sum_j \weight{j} C(\mathcal{S}, j) \\
& \leq \bigO{(\loglogm)} \cdot \wopti + \sum_j \weight{j} C(\mathcal{S}, j),
\end{alignat}
which completes the proof.
\end{subequations}

\section{Concluding Remarks}
This paper studies the problem of scheduling tasks with precedence constraints on related machines with machine-dependent communication times, and addresses two long-standing open problems in the area.  We introduce a new scheduler, GETF, and prove worst-case approximation ratios for it in the case of (i) scheduling to minimize the makespan and (ii) scheduling to minimize the total weighted completion time.  These results represent the first progress on this problem in the 30 years since \cite{hwang1989scheduling} provided a bound on the makespan under ETF in the case of identical servers and communication time. No previous bounds exist for the case of total weighted completion time when communication time is considered.

A variety of open questions are raised by the work in this paper.  Most importantly, while we have provided theoretical bounds on the performance of GETF, it is also important to investigate how GETF performs in real settings via an implementation study.  GETF could be particularly powerful in the context of large-scale machine learning platforms, where workflows are typically specified as DAGs.  As part of such a study, it would be interesting to understand how to best choose a tie-breaking rule, how to adjust the group assignment rules for the best performance, and how various choices for these rules compare with heuristics that have been suggested in the literature. Further, it will be important to see if it is possible to obtain some theoretical results characterizing how the optimal choices for these rules depend on properties of real-world workloads. Moreover, it will also be interesting to extend the results of this work to stochastic settings, e.g., when task sizes are unknown.

On the analytic side, it will be interesting to discover other applications of the Separation Principle.  It may be possible to revisit other scheduling problems for precedence-constrained tasks and obtain more general results because of the separation this result provides.  Further, it is possible to consider other performance measures, such as energy usage and resource augmentation, using the Separation Principle.

\bibliographystyle{unsrt}  
\bibliography{references}  %%% Remove comment to use the external .bib file (using bibtex).
%%% and comment out the ``thebibliography'' section.

%%% Comment out this section when you \bibliography{references} is enabled.
% \begin{thebibliography}{1}

% \bibitem{kour2014real}
% George Kour and Raid Saabne.
% \newblock Real-time segmentation of on-line handwritten arabic script.
% \newblock In {\em Frontiers in Handwriting Recognition (ICFHR), 2014 14th
%   International Conference on}, pages 417--422. IEEE, 2014.

% \bibitem{kour2014fast}
% George Kour and Raid Saabne.
% \newblock Fast classification of handwritten on-line arabic characters.
% \newblock In {\em Soft Computing and Pattern Recognition (SoCPaR), 2014 6th
%   International Conference of}, pages 312--318. IEEE, 2014.

% \bibitem{hadash2018estimate}
% Guy Hadash, Einat Kermany, Boaz Carmeli, Ofer Lavi, George Kour, and Alon
%   Jacovi.
% \newblock Estimate and replace: A novel approach to integrating deep neural
%   networks with existing applications.
% \newblock {\em arXiv preprint arXiv:1804.09028}, 2018.

% \end{thebibliography}

\end{document}